\newcommand{\bs}[1]{\boldsymbol{#1}}
\newtheorem{theorem}{Theorem}
\newtheorem{lemma}{Lemma}
\newtheorem{conditions}{Conditions}
\title{Joint modelling of longitudinal and time-to-event data applied to group sequential clinical trials}
\author{Abigail J. Burdon$^{1,*}$, 
Lisa V. Hampson$^{2}$ and Christopher Jennison$^{3}$ \\
$^{1}$MRC Biostatistics Unit, University of Cambridge, Robinson Way, Cambridge, CB2 0SR, U.K\\
$^{2}$Statistical Methodology, Novartis Pharma AG, Basel BA2 7AY, Swizerland\\
$^{3}$Department of Mathematical Sciences, University of Bath, Claverton Down, Bath BA2 7AY, U.K \\
$^{*}$\texttt{Email: abigail.burdon@mrc-bsu.cam.ac.uk}}
\begin{document}
\maketitle
\begin{abstract}
Often in Phase 3 clinical trials measuring a long-term time-to-event endpoint, such as overall survival or progression-free survival, investigators also collect repeated measures on biomarkers which may be predictive of the primary endpoint. Although these data may not be leveraged directly to support early stopping decisions, can we make greater use of these data to increase efficiency and improve interim decision making? We present a joint model for longitudinal and time-to-event data and a method which establishes the distribution of successive estimates of parameters in the joint model across interim analyses. With this in place, we can use the estimates to define both efficacy and futility stopping rules. Using simulation, we evaluate the benefits of incorporating biomarker information and the affects on interim decision making. 
\end{abstract}
\begin{keywords}\
Efficient designs, group sequential, joint modelling, longitudinal data, time-to-event data.\end{keywords}


\section{Introduction}
\label{sec:intro}
Interest in joint modelling is motivated by clinical trials where the biomarker is predictive of a time-to-event (TTE) outcome. Overall survival and progression-free survival are examples of TTE endpoints. For example, \citet{goldman1996response} use CD4 lymphocyte cell counts as a surrogate endpoint for survival in a clinical trial comparing the efficacy and safety of two antiviral drugs for HIV-infected patients. \citet{taylor2013real} use joint models to predict survival times of patients with prostate cancer based on prostate specific antigen (PSA) levels measured by blood tests at multiple hospital visits.

To date, most group sequential designs for survival trials focus on a single primary endpoint. Designs which leverage repeated measurements on continuous or binary endpoints have been reported such as those by \citet{galbraith2003interim} and this work extends this idea to survival trials with repeated measurements on a predictive biomarker. We shall focus on monitoring randomised control trials so that patients are randomised at baseline to receive either a novel treatment or control and the objective is to test for superioirity of the treatment versus control. Hence, let \(\eta\) denote the treatment effect in a statistical model and let \(\eta_0\) be the true value of \(\eta\), then we shall test the null hypothesis
\begin{equation}
\label{eq:null}H_0:\eta_0\leq 0 \hspace{5mm} \text{vs}\hspace{5mm} H_A:\eta_0 > 0.
\end{equation}
In the group sequential setting with a total of \(K\) analyses, we shall test \(H_0\) at analys \(k\) by defining treatment effect estimates \(\hat{\theta}^{(k)},\) information levels \(\mathcal{I}_k\) and standardised \(Z-\)statistics given by \(-\hat{\eta}^{(k)}\sqrt{\mathcal{I}_k}\) for \(k=1,\dots,K.\) The aim will then be to determine the upper boudary constants \(b_1,\dots,b_K\) and lower boundary constants \(a_1,\dots,a_K\) for the group sequential test as in Figure~\ref{fig:boundaries}. These boundaries are calculated to ensure such that the Type 1 error rate  is \(\alpha\) and power is \(1-\beta\) when \(\eta=-0.5\). In particular, we shall use the error spending test, also known as \citet{gordon1983discrete}, to calculate the boundary points.

\begin{figure}[t]
\centering
\includegraphics[width=0.9\textwidth]{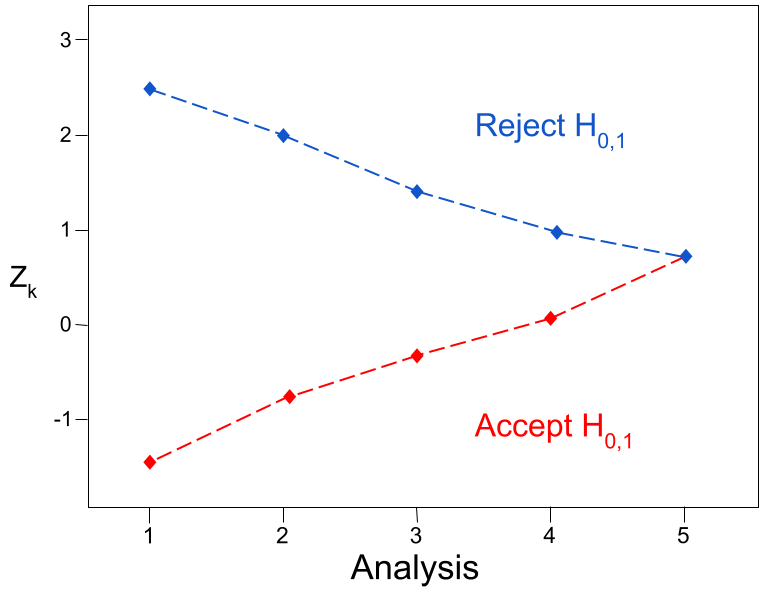}
\caption{Group sequential trial with \(k=5\) analyses. Blue points represent the upper boundary constants \(b_1,\dots, b_K\) and red points give the lower boundary constants \(a_1,\dots,a_K.\).}
\label{fig:boundaries}
\end{figure}

We shall develop methods for designing and analysing a group sequential trial based on a joint model for longitudinal and TTE data.  We may believe that a trend in the trajectory of the biomarker is predictive of the TTE outcome, and we would like to know whether these additional longitudinal data can be used to inform early stopping decisions. Suppose that biomarker observations are available but have not been used in the analysis. We shall assess the change in efficiency of the trial when these observations are included in the analysis. We shall focus on efficiency measured in terms of the number of patients that need be recruited to achieve a certain power, and we and show that, in some scenarios, the trial using the longitudinal data is up to 1.7 times as efficient as the trial which discards the longitudinal data.

\section{Joint modelling}\label{sec:joint}
\subsection{Joint model}
\label{subsec:joint_model}
The joint model that we consider is given in Equation (2) by \cite{tsiatis2001semiparametric}. We shall refer to the authors as TD for short. There are two processes in this model which represent the survival and longitudinal parts separately, and these processes are linked through the hazard rate of the survival process. First we consider the longitudinal data. Suppose that \(X_i(t)\) is the true value of the biomarker at time \(t\) for subject \(i\) and that \(W_i(t)\) is the observed value of the biomarker at time \(t\) for patient \(i\). Then the longitudinal model takes the form
\begin{align}
\label{eq:long1}
X_i(t)&=b_{0i}+b_{1i}t \\
\label{eq:long2}
W_i(t)&= X_i(t)+\epsilon_i(t)
\end{align}
where \(\mathbf{b}_i=(b_{i0},b_{i1})\) is a vector of patient specific random effects and \(\epsilon_i(t)\) is the measurement error. In general, the vector \(\mathbf{b}_i\) can have dimension \(p\) and the function \(X_i(t)\) need not be constrained to a linear function of \(t.\) We consider a random effects model where each \(\mathbf{b}_i\) is a random variable with density function \(f(\cdot).\) The measurement errors are assumed to be independent and if the biomarker for patient \(i\) is measured at times \(t_{i1},\dots t_{im_i}\), then \(\epsilon_i(t_{ij})|\mathbf{b}_i\sim N(0,\sigma^2) \text{ for } j=1,\dots,m_i\) and \(\epsilon(t)\) and \(\epsilon(t')\) are independent for \(t\neq t'.\) For now, we shall assume that \(\sigma^2\) is known and we later describe how \(\sigma^2\) can be estimated.

The model for the survival endpoint is a Cox proportional hazards model in which the underlying trajectory \(X_i(t)\) acts as a time-varying covariate with coefficient \(\gamma\). Let \(Z_i\) be an indicator function that patient \(i\) receives the experimental treatment and let \(\eta\) be the corresponding treatment effect. Then the hazard function is given by
\begin{equation}
\label{eq:surv}
h_i(t)=h_0(t)\exp\{\gamma X_i(t)+\eta Z_i\},
\end{equation}
where \(h_0(\cdot)\) is the baseline hazard function. In general, \(Z_i\) may be a \(p\times 1\) column vector of coefficients and  \(\eta\) is the corresponding coefficient vector of length \(p\). In this model, \(\gamma\) is the parameter that determines the correlation between the longitudinal data and the TTE endpoint. We show in Section~\ref{sec:results} that if \(\gamma=0\) then the investigator may pay a small penalty for having tried to leverage the biomarker data. However, if \(\gamma>0\), then there is a large benefit from fitting this joint model to the data. Together, Equations~\eqref{eq:long1}, \eqref{eq:long2} and \eqref{eq:surv} define the joint model.

\subsection{Conditional Score}
\label{subsec:conditional_score}
In the fixed sample setting, TD present the ``conditional score" method for fitting the joint model to the data. The method adapts the general theoretical work by \citet{stefanski1987conditional} who find unbiased score functions by conditioning on certain parameter-depedent sufficient statistics. This is a desireable method because the analysis is semi-parametric so that there are no distributional assumptions required for the random effects. Further, the authors show that the parameter estimates are normally distributed and unbiased. We shall now extend the fixed sample theory for the joint model to group sequential trials. To perform a group sequential trial with \(K\) analyses, we need to know the joint distribution of the sequence of treatment effect estimates that will be obtained at analyses \(k=1,\dots,K.\) To determine this distribution, we shall define group sequential versions of all objects included in the single-stage conditional score, which are calculated using data obtained at that analysis. Equivalent definitions of TD's single-stage conditional score, and associated functions, can be found by setting \(K=k=1\)

The conditional score methodology builds upon the theory of counting processes. In the general definition, a counting process is a step-function increasing in integer increments and the survival counting process is a step function jumping from 0 to 1 at the failure time for an uncensored observation.

The censoring mechanism is used to keep patients in the at-risk set who have yet to experience an event. For patient \(i\) with time-to-failure random variable \(F_i\), let \(C_i(k)\) be the time-to-censoring random variable at analysis \(k\). This censoring event includes ``end of study" censoring for the total follow-up time of patient \(i\) at analysis \(k\), then at analysis \(k\) the event time random variable is \(T_i(k)=\min\{F_i,C_i(k)\}.\) The observed event time at analysis \(k\) is \(t_i(k)\) and the observed censoring indicator is \(\delta_i(k)=\mathbb{I}\{F_i\leq C_i(k)\}.\) In the conditional score approach, to be included in the at-risk set at time \(t\) the patient must have at least two longitudinal observations to fit the longitudinal regression model. The at-risk process at analysis \(k\) is an indicator for not yet observing the event, not yet censored, or not having enough longitudinal observations. Let \(v_{i2}\) be the time of the second longitudinal observation for patient \(i\), then at analysis \(k\), the at-risk process and counting process  for the joint model are
\begin{align*}
Y_i(k,t) &=\mathbb{I}\{t_i(k)\geq t, v_{i2}\leq t\} \\
N_i(k,t) &= \mathbb{I}\{t_i(k) \leq t, \delta_i(k) = 1, v_{i2} \leq t\}.
\end{align*}

The function
\begin{align*}
dN_i(k,t) &= N_i(k,t+dt)- N_i(k,t^{-}) \\
&= \mathbb{I}\{t\leq t_i(k) < t+dt, \delta_i(k) = 1, v_{i2} \leq t\}
\end{align*}
presents us with useful notation: for any function or stochastic process \(f(\cdot),\) the stochastic integral
\begin{equation}
\label{eq:stochastic_integral}
\int_0^\infty f(u)dN_i(k,u)=f(t_i)
\end{equation}
is \(f\) evaluated at the place where \(N_i(k,t)\) jumps from 0 to 1 if \(\delta_i(k)=1\) and \(v_{i2} \leq t\), and 0 otherwise.

By analogy to survival analysis, we seek a compensated counting process with expectation zero. This property leads us to define an estimating equation from which we can obtain treatment effect estimates that are asymptotically normally distributed by Section~5.3 by~\citet{van2000asymptotic}. In the usual survival analysis setting, we can calculate the compensated counting process by subtracting the intensity process from the counting process itself. In the joint modelling setting, this  is not as simple because the randomness of the nuisance parameters \(\mathbf{b}\) mean that the intensity process is unpredictable. To overcome this, TD introduce a ``conditional intensity process" which is conditional on a certain ``sufficient statistic". The origins of the conditional intensity process and sufficient statistic are not crucial for our purpose. What we actually use are the definitions and properties that are derived from these. In what follows, we shall present TD's definitions of the single-stage conditional intensity process, sufficient statistic and compensated counting process and we extend these to the group sequential versions. We shall then show that the group sequential compensated counting process has expectation zero.

For patient \(i\), let \(v_i(u)\) be set of all time points for measurements of the biomarker, up to and including time \(u\). Let \(\hat{X}_i(u)\) be the ordinary least squares estimate of \(X_i(u)\) for patient \(i\) based on the set of measurements taken at times \(v_i(u)\). That is, calculate \(\hat{b}_{0i}(u)\) and \(\hat{b}_{1i}(u)\) based on measurements taken at times \(v_i(u)\), then \(\hat{X}_i(u)=\hat{b}_{0i}(u)+\hat{b}_{1i}(u)u\). As we pass time \(v_{ij},\) a new observation \(W_{ij}\) is generated and the formula for \(\hat{X}_i(u)\) is updated for larger values of \(u\). This seems strange since at an early time point, \(s\) where \(s < u\), we use data \(v_i(s)\) in the calculation of \(\hat{X}_i(s)\) even though there may be more data available at time points \(v_i(u)\). However, this is necessary for the martingale property to hold in later results. Suppose that \(\sigma^2\theta_i(u)\) is the variance of the estimator \(\hat{X}_i(u)\) at time \(u\). TD define the sufficient statistic to be the function
\begin{align*}
S_i(k,t,\gamma,\sigma^2)&=\hat{X}_i(t)+\gamma\sigma^2\theta_i(t)dN_i(k,t)\\
&=\hat{b}_{0i}(t)+\hat{b}_{1i}(t)t+ \gamma\sigma^2\theta_i(t)dN_i(k,t)
\end{align*}
which is defined for all \(t\in (v_{i2},t_i)\) for patient \(i\). The corresponding conditional intensity process and compensated counting process are defined by
\begin{align}
\label{eq:cond_intensity_group1}
\lambda^C_i(k,t) &= lim_{dt\downarrow 0}\frac{\mathbb{P}(dN_i(k,t) = 1|S_i(k,t, \gamma,\sigma^2), t_i(t),Z_i,Y_i(k,t))}{dt}\\
\label{eq:cond_intensity_group2}
&=h_0(t)\exp\{\gamma S_i(k,t,\gamma,\sigma^2)-\gamma^2\sigma^2\theta_i(t)/2+\eta^T Z_i\}Y_i(k,t) \\
\notag
&=h_0(t)E_{0i}(t,\gamma,\eta,\sigma^2)Y_i(k,t) \\
\notag
M_i(k,t) &= N_i(k,t)-\int_0^t\lambda^C_i(k,t)du \\
\notag
dM_i(k,t) &= dN_i(k,t)-\lambda^C_i(k,t).
\end{align}
We show below that this compensated counting process has expectation zero conditional on \(S_i(k,t,\gamma,\sigma^2)\) and use this result to obtain the asymptotic distribution of some parameter estimates in the joint model. Specifically, we will determine the distribution of the estimates \(\hat{\gamma}^{(k)}, \hat{\eta}^{(k)}\) and \(\hat{\sigma}^{(k)2}\) which are unbiased estimates, at analysis \(k\), of \(\gamma,\eta\) and \(\sigma^2\) respectively.

\begin{lemma}
\label{lemma:E0_joint}
The function \(dM_i(k,t)\) has expectation zero conditional on the sufficient statistic, that is
\[\mathbb{E}(dM_i(k,t)|S_i(k,t, \gamma,\sigma^2), t_i(t),Z_i,Y_i(k,t))=0.\]
\end{lemma}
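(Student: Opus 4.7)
My plan is to reduce the lemma to verifying the second-line identity in (\ref{eq:cond_intensity_group2}). By construction $dM_i(k,t) = dN_i(k,t) - \lambda^C_i(k,t)dt$, and the first line of (\ref{eq:cond_intensity_group2}) defines $\lambda^C_i(k,t)$ as the conditional intensity of $dN_i(k,t)$. Since the explicit form in the second line is measurable with respect to $(S_i(k,t,\gamma,\sigma^2), Z_i, Y_i(k,t))$, taking conditional expectations yields $\mathbb{E}[dM_i(k,t) \mid S_i, t_i(t), Z_i, Y_i(k,t)] = 0$ as soon as that explicit form is established. The core task is therefore to verify the explicit form, which is the group-sequential analogue of TD's fixed-sample calculation.

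To derive it, I would start from the true hazard: given $(b_i, Z_i)$ and survival to $t$, the probability of a jump in $[t, t+dt)$ on the at-risk set is $h_0(t)\exp\{\gamma X_i(t) + \eta^T Z_i\}Y_i(k,t)\,dt + o(dt)$. The obstacle is that $X_i(t) = b_{0i} + b_{1i}t$ depends on unknown random effects whose distribution is left unspecified. The conditional score trick of TD replaces $X_i(t)$ with the ordinary least squares estimate $\hat{X}_i(t)$, exploiting that $\hat{X}_i(t) \mid b_i \sim N(X_i(t), \sigma^2\theta_i(t))$ by standard linear regression theory applied to the measurements taken at times $v_i(t)$.

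The key algebraic step is a Gaussian completing-the-square identity: the normal density of $\hat{X}_i(t)$ given $b_i$, multiplied by $\exp\{\gamma X_i(t)\}$, equals $\exp\{\gamma \hat{X}_i(t) - \gamma^2\sigma^2\theta_i(t)/2\}$ times the same normal density evaluated at the shifted point $\hat{X}_i(t) - \gamma\sigma^2\theta_i(t)$. Consequently, when we condition on $S_i(k,t,\gamma,\sigma^2) = \hat{X}_i(t) + \gamma\sigma^2\theta_i(t)\,dN_i(k,t)$, the Bayes posterior odds of $\{dN_i(k,t) = 1\}$ versus $\{dN_i(k,t) = 0\}$ do not depend on $b_i$, since on $\{dN_i=0\}$ we have $S_i = \hat{X}_i(t)$ while on $\{dN_i=1\}$ we have $S_i = \hat{X}_i(t) + \gamma\sigma^2\theta_i(t)$, and the Gaussian identity makes the $b_i$-dependent factors cancel. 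Collecting the $O(dt)$ terms in this ratio produces the claimed exponential form $h_0(t)\exp\{\gamma S_i - \gamma^2\sigma^2\theta_i(t)/2 + \eta^T Z_i\}Y_i(k,t)$.

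The main obstacle is the self-referential nature of $S_i$, which itself contains $dN_i(k,t)$: we are conditioning on a statistic whose value depends on the event whose conditional probability we wish to compute. The remedy is precisely the Bayes ratio argument above, where this circular dependence is what drives the cancellation of the random-effects density. Beyond this, extending TD's fixed-sample argument to analysis $k$ is routine, since the conditioning on $t_i(t)$ and $Y_i(k,t)$ absorbs the analysis-specific censoring mechanism $C_i(k)$, and the Gaussian likelihood for $\hat{X}_i(t)$ depends only on the measurement times $v_i(t)$, which are the same across analyses.
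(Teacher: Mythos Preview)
Your proposal is correct and in fact supplies more detail than the paper, whose proof consists solely of the citation ``See \citet{tsiatis2001semiparametric}.'' Your sketch accurately reproduces TD's argument---the normality of $\hat X_i(t)$ given $b_i$, the exponential-family completing-the-square that makes $S_i(k,t,\gamma,\sigma^2)$ sufficient for $X_i(t)$, and the Bayes-ratio cancellation of the random-effects density---so the approaches coincide.
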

\begin{proof} See \citet{tsiatis2001semiparametric}.
\end{proof}

TD present the fixed sample conditional score in their Equation~(6). We present some functions that are needed to define the group sequential conditional score at analysis \(k\) and the derivate of such an object with respect to parameters \(\gamma\) and \(\eta\). Let
\[J_i(k,t, \gamma,\sigma^2) = \{S_i(k,t, \gamma, \sigma^2)-\hat{X}_i(t)-\gamma\sigma^2\theta_i(t), 0,\dots,0\}^T, \hspace{1cm}
\Gamma_i(k,t,\sigma^2) = \begin{bmatrix}
\sigma^2\theta_i(t) & \mathbf{0} \\  \mathbf{0}  &  \mathbf{0}  \end{bmatrix}\]
be a \((p+1)\times 1\) vector and \((p+1)\times (p+1)\) matrix respectively. Now, for the following functions, we drop the dependency of all functions on the parameters \(k, t, \gamma,\eta\) and \(\sigma^2.\) Then the functions of interest are

\begin{align}
\notag
S_c^{(0)} &= \frac{1}{n}\sum_{i=1}^n Y_iE_{0i}
&C^{(1)}&= \frac{1}{n}\sum_{i=1}^n J_iY_iE_{0i} \\
\notag
S_c^{(1)} &= \frac{1}{n}\sum_{i=1}^n \left\{\begin{array}{c}
S_i\\
Z_i
\end{array}\right \}Y_iE_{0i}
&C^{(2)} &= \frac{1}{n}\sum_{i=1}^n \left\{\begin{array}{c}
S_i\\
Z_i
\end{array}\right \}J_i^T Y_iE_{0i} \\
\label{eq:SC}
S_c^{(2)} &= \frac{1}{n}\sum_{i=1}^n \left\{\begin{array}{c}
S_i\\
Z_i
\end{array}\right \} \left\{\begin{array}{c}
S_i\\
Z_i
\end{array}\right \}^TY_iE_{0i}
&C^{(3)}&= \frac{1}{n} \sum_{i=1}^n\Gamma_i(u)dN_i(k,u)Y_i E_{0i} \\
\notag
E_c&= \frac{S_c^{(1)}}{S_c^{(0)}}
&V^{(2)}_c &= \frac{C^{(2)}}{S_c^{(0)}}-\frac{S_c^{(1)}C^{(1)^T} }{[S_c^{(0)}]^2 } \\
\notag
V^{(1)}_c &= \frac{S_c^{(2)}}{S_c^{(0)} }-\frac{S_c^{(1)}S_c^{(1)^T} }{[S_c^{(0)}]^2 }
&V^{(3)}_c &= \frac{C^{(3)}}{S_c^{(0)} }.
\end{align}
These functions are analogous to to those in \citet{jennison1997group} for survival data. In the simple survival model, we have that \(S^{(1)}=\partial S^{(0)}/\partial\bs{\theta} \) and \(S^{(2)}=\partial  S^{(1)}/\partial \bs{\theta}\) where \(\bs{\theta}\) is the vector of parameters in the hazard function of the simple survival model. In a similar manner, the superscripts \((1),(2)\) and \((3)\) refer to the order of the differentiation for certain functions which we explain further in Section~\ref{subsec:diff}. The function \(E_c(k,t,\gamma,\eta,\sigma^2)\) has the interpretation of the expectation of the vector \(\{S_i(k,t,\gamma,\sigma^2),Z_i^T\}^T\) at analysis \(k\) weighted by the conditional intensity process. Let \(\tau_k\) be the maximum follow-up time at analysis \(k\), then the conditional score for analysis \(k\) is
\begin{equation}
\label{eq:cond_score_k}
U_c(k,\gamma,\eta,\sigma^2) = \int_0^{\tau_k}\sum_{i=1}^{n}\left( \{S_i(k,u, \gamma,\sigma^2), Z_i^T\}^T- E_c(k, u, \gamma,\eta, \sigma^2)\right) dN_i(k, u).
\end{equation}

\subsection{Differentiation of the conditional score}
\label{subsec:diff}
Another object of importance is the first derivative of the conditional score function, Equation~\eqref{eq:cond_score_k}, with respect to parameters \(\gamma\) and \(\eta.\) This matrix plays a key role in the definition of the covariance matrix for the estimates \(\hat{\gamma}\) and \(\hat{\eta}\) and has a likeness to the Fisher information matrix which is the derivative of the score statistic for general statistical models. TD comment that the variance matrix can be found, however they do not present an equation for such an object. Details for the following calculation can be found in the supplementary materials. The first derivative the conditional score of Equation~\eqref{eq:cond_score_k} with respect to \(\gamma\) and \(\eta\) is
\begin{align}
\notag
\frac{\partial}{\partial(\gamma,\eta)^T}U_c(k,\gamma,\eta,\sigma^2) = \sum_{i=1}^{n} \int_0^\infty \bigg[&\Gamma_i(k,u,\sigma^2) -V_c^{(1)}(k,u,\gamma,\eta,\sigma^2) \\
\label{eq:d_cond_score}
+  &V_c^{(2)}(k,u,\gamma,\eta,\sigma^2)  + V_c^{(3)}(k,u,\gamma,\eta,\sigma^2)\bigg]dN_i(k,u).
\end{align}

\subsection{Asymptotic theory for parameter estimates \(\hat{\gamma},\hat{\eta}\) and \(\hat{\sigma}^2\) in the joint model}
\label{subsec:asymptotic_theory}

We shall now prove that the estimates \(\hat{\gamma}^{(1)}, \hat{\eta}^{(1)},\dots,\hat{\gamma}^{(K)}, \hat{\eta}^{(K)}\) are asymptotically multivariate normally distributed and we shall derive an explicit form for the covariance matrix of this vector of parameters. In what follows, we shall determine the limiting distribution as \(n\rightarrow\infty.\) For clarity, we denote the objects \(\hat{\gamma}_n^{(k)}, \hat{\eta}_n^{(k)}\) and \(U_c^{(n)}(k, \gamma,\eta,\sigma)\) as dependent on \(n\).

\begin{theorem}
\label{theorem:joint_group}
Suppose that \(\gamma_0,\eta_0\) and \(\sigma^2_0\) are the true values of the parameters \(\gamma,\eta\) and \(\sigma^2\) respectively. For each \(k=1,\dots,K\), let \(\hat{\gamma}_n^{(k)}\) and \(\hat{\eta}_n^{(k)}\) be the values of \(\gamma\) and \(\eta\) which are the solution to the equation \((U_c^{(n)}(k, \gamma,\eta,\sigma_0)= 0\) and suppose that Conditions~\ref{conditions:conditions_joint_group} hold.  Then the vector \((\hat{\gamma}_n^{(1)}, \hat{\eta}_n^{(1)}, \dots, \hat{\gamma}_n^{(K)}, \hat{\eta}_n^{(K)})^T \) converges in distribution to a multivariate Gaussian random variable, specifically
\[n^{\frac{1}{2}}\left(\begin{array}{c}
\hat{\gamma}_n^{(1)} -\gamma_0 \\
\hat{\eta}_n^{(1)} -\eta_0 \\
\vdots \\
\hat{\gamma}_n^{(K)} -\gamma_0 \\
\hat{\eta}_n^{(K)} -\eta_0 \\
\end{array}\right)
\xrightarrow{d} N
\left(\begin{bmatrix}\mathbf{0}\\ \mathbf{0}\\ \vdots\\\mathbf{0}\end{bmatrix}, 
\Sigma=\begin{bmatrix} \Sigma_{11} & \Sigma_{12} & \cdots & \Sigma_{1K} \\ \Sigma_{12} & \Sigma_{22} & \cdots & \Sigma_{2K} \\ \vdots & \vdots & \ddots & \vdots \\ \Sigma_{1K} & \Sigma_{2K} & \cdots & \Sigma_{KK}\end{bmatrix}\right)\]
where 
\begin{equation}
\label{eq:sigma}
\Sigma_{k_1k_2}= (A^{(k_1)})^{-1}B^{(k_1)}((A^{(k_2)})^{-1})^T
\end{equation}
 and the matrices \(A^{(k)}\) and \(B^{(k)}\) are defined by
\begin{align}
\label{eq:group_A}
A^{(k)} &= \int_0^{\tau_k}\left[\mathbb{E}(\Gamma_i(k,u, \sigma^2_0)) -v_c^{(1)}(k,u,\gamma_0,\eta_0,\sigma_0^2)- v_c^{(2)}(k,u,\gamma_0,\eta_0,\sigma_0^2)\right]s^{(0)}_c(k,u,\gamma_0,\eta_0,\sigma_0^2)h_0(u)du  \\
\label{eq:group_B}
B^{(k)} &=\int_0^{\tau_k} v_c^{(1)}(k,u,\gamma_0,\eta_0,\sigma_0^2)s^{(0)}_c(k,u,\gamma_0,\eta_0,\sigma_0^2)h_0(u)du.
\end{align}

\end{theorem}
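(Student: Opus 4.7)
The plan is to follow the classical $M$-estimator / estimating-equation recipe, adapted to the group sequential setting by packaging all $K$ analyses into one vector and exploiting the nested counting-process structure across analyses. I would first establish consistency of $(\hat{\gamma}_n^{(k)}, \hat{\eta}_n^{(k)})$ for each fixed $k$ from Lemma~\ref{lemma:E0_joint} and a uniform law of large numbers applied to $n^{-1} U_c^{(n)}(k,\gamma,\eta,\sigma_0)$, so that the root of the estimating equation is close to $(\gamma_0,\eta_0)$ with probability tending to one.

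Next I would Taylor-expand $U_c^{(n)}(k,\hat{\gamma}_n^{(k)},\hat{\eta}_n^{(k)},\sigma_0)=0$ around $(\gamma_0,\eta_0)$ to obtain
\[
n^{1/2}\!\begin{pmatrix}\hat{\gamma}_n^{(k)}-\gamma_0 \\ \hat{\eta}_n^{(k)}-\eta_0\end{pmatrix}
= -\left(\tfrac{1}{n}\,\tfrac{\partial U_c^{(n)}}{\partial(\gamma,\eta)^T}(k,\tilde{\gamma},\tilde{\eta},\sigma_0)\right)^{-1} n^{-1/2} U_c^{(n)}(k,\gamma_0,\eta_0,\sigma_0),
\]
with $(\tilde{\gamma},\tilde{\eta})$ on the segment between the estimator and the truth. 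Using the explicit formula~\eqref{eq:d_cond_score} together with uniform convergence of the averages $S_c^{(\ell)}, C^{(\ell)}$ to their deterministic limits $s_c^{(\ell)}, c^{(\ell)}$, and replacing $dN_i(k,u)$ by its compensator $\lambda^C_i(k,u)\,du = h_0(u) E_{0i}(u)Y_i(k,u)\,du$ via Lemma~\ref{lemma:E0_joint} and a martingale LLN, the Hessian factor converges in probability to $-A^{(k)}$. This reduces the theorem to establishing joint asymptotic normality of the normalised score vector $(n^{-1/2} U_c^{(n)}(k,\gamma_0,\eta_0,\sigma_0))_{k=1}^K$ and computing its limiting covariance.

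For the joint CLT I would rewrite each score, using Lemma~\ref{lemma:E0_joint}, as a sum of patient-level mean-zero contributions
\[
n^{-1/2} U_c^{(n)}(k,\gamma_0,\eta_0,\sigma_0) = n^{-1/2}\sum_{i=1}^n \xi_i^{(k)},\qquad
\xi_i^{(k)} := \int_0^{\tau_k}\!\!\left(\{S_i(k,u), Z_i^T\}^T - e_c(k,u)\right)dM_i(k,u),
\]
where $e_c$ is the deterministic limit of $E_c$. The $\xi_i^{(k)}$ are i.i.d.\ in $i$, and for each $i$ the vector $(\xi_i^{(1)},\dots,\xi_i^{(K)})$ has finite second moments under Conditions~\ref{conditions:conditions_joint_group}, so the multivariate Lindeberg CLT gives joint asymptotic normality of the scores with covariance $(\mathbb{E}[\xi_i^{(k_1)}\xi_i^{(k_2)T}])_{k_1,k_2}$. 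Combining with Slutsky and the Hessian limit then yields $\Sigma_{k_1k_2}=(A^{(k_1)})^{-1}\mathbb{E}[\xi_i^{(k_1)}\xi_i^{(k_2)T}]((A^{(k_2)})^{-1})^T$.

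The main obstacle, and the step I would spend the most care on, is identifying $\mathbb{E}[\xi_i^{(k_1)}\xi_i^{(k_2)T}] = B^{(k_1)}$ for $k_1 \leq k_2$. The key structural fact is that analyses are nested: $\tau_{k_1}\le\tau_{k_2}$, $C_i(k_1)\le C_i(k_2)$, $Y_i(k_1,u)\le Y_i(k_2,u)$, and on the interval $[0,\tau_{k_1}]$ one has $dN_i(k_1,u)=dN_i(k_2,u)$ whenever $Y_i(k_1,u)=1$. Conditioning on the sufficient statistic and using Lemma~\ref{lemma:E0_joint} to turn cross-products of $dM_i(k_1,\cdot)$ and $dM_i(k_2,\cdot)$ into a single integral of the conditional second moment against the compensator $h_0(u)s_c^{(0)}(k_1,u)\,du$, the outer products collapse to the variance of $\{S_i,Z_i^T\}^T$ weighted by the intensity, i.e.\ to $v_c^{(1)}(k_1,u)s_c^{(0)}(k_1,u)h_0(u)$, whose time integral up to $\tau_{k_1}$ is precisely $B^{(k_1)}$ as in~\eqref{eq:group_B}. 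This nested-information identity is the hallmark of the canonical joint distribution for group sequential monitoring and is what makes the covariance $(k_2$-dependence drop out on one side, leaving only $B^{(k_1)}$ on the smaller index.
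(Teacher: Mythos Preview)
Your proposal is correct and follows essentially the same route as the paper: stack the $K$ estimating equations, Taylor-expand, show the Hessian converges to $A^{(k)}$, apply the multivariate CLT to the scores after replacing $E_c$ by its deterministic limit $e_c$ (the paper's decomposition into Expressions~\eqref{eq:cond_group_a} and~\eqref{eq:cond_group_b}), and identify the cross-covariance of the scores as $B^{(k_1)}$ for $k_1\le k_2$.

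The one substantive stylistic difference is in how the cross-covariance is computed. The paper, following \citet{jennison1997group}, introduces the increment counting processes $DN_i(k,t)=N_i(k,t)-N_i(k-1,t)$ and their compensated versions $DM_i(k,t)$, writes each score as a double sum over analysis increments $l=1,\dots,k$, and uses orthogonality of the increments to reduce the $(k_1,k_2)$ cross-covariance to a single integral over $[0,\tau_{k_1}]$ before passing to the limit $B^{(k_1)}$. Your approach short-circuits this by observing directly that on $[0,\tau_{k_1}]\cap\{Y_i(k_1,\cdot)=1\}$ the processes $dN_i(k_1,\cdot)$ and $dN_i(k_2,\cdot)$ coincide. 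Both arguments encode the same nested-information structure; the paper's increment decomposition is slightly more explicit about why the $k_2$-dependence vanishes, while yours is more compact. One small caution: your closing remark that this identity is ``the hallmark of the canonical joint distribution'' is true at the level of the \emph{scores}, but the paper later shows that the \emph{estimators} $\hat{\eta}^{(k)}$ do not inherit the canonical joint distribution precisely because $A^{(k)}\ne B^{(k)}$ in general, so be careful not to conflate the two.
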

\begin{proof}
We begin this proof by considering the set of \(K(p+1)\) stacked equations
\begin{equation}
\label{eq:est_eqs_all}
\begin{bmatrix}U_c(1, \gamma,\eta,\sigma^2) \\ \vdots \\ U_c(K,\gamma,\eta,\sigma^2)\end{bmatrix} = \begin{bmatrix} \mathbf{0} \\ \vdots \\ \mathbf{0} \end{bmatrix}
\end{equation}
and we shall show that these equations define a multidimensional estimating equation. That is, for each \(k=1,\dots,K\), the function \(U_c(k,\gamma,\eta,\sigma^2)\) has expectation zero. The arguments given by TD about asymptotic normality in the fixed sample case apply for each \(k=1,\dots,K\). We present some of these arguments which are useful for the derivation of the covariance matrix in the group sequential setting. We shall write \(U_c(k,\gamma,\eta,\sigma^2) \) as 
\begin{align}
\label{eq:cond_group_a}
&\int_0^{\tau_k}\sum_{i=1}^{n}\left( \{S_i(k,u, \gamma,\sigma^2), Z_i^T\}^T- e_c(k,u, \gamma,\eta, \sigma^2)\right) dM_i(k,u) \\
\label{eq:cond_group_b}
+&\int_0^{\tau_k}\sum_{i=1}^{n}\left(e_c(k,u, \gamma,\eta, \sigma^2)- E_c(k,u, \gamma,\eta, \sigma^2)\right) dM_i(k,u).
\end{align}
By the same argument as TD, \(n^{-1}\) times Expression~\eqref{eq:cond_group_b} converges in probability to zero in a neighbourhood of \((\gamma_0,\eta_0)\) and we deduce that the behaviour of the estimates \(\hat{\gamma}^{(k)}\) and \(\hat{\eta}^{(k)}\) will be dictated by Expression~\eqref{eq:cond_group_a}. Then we have that the expectation of Expression~\eqref{eq:cond_group_a} is
\begin{equation}
\label{eq:exp_0}
\mathbb{E}\left( \int_0^{\tau_k}\sum_{i=1}^{n}\left( \{S_i(k,u, \gamma,\sigma^2), Z_i^T\}^T- e_c(k,u, \gamma,\eta, \sigma^2)\right) dM_i(k,u)\right)=0
\end{equation}
for each \(k=1,\dots,K\). Therefore combined with the fact that Expression~\eqref{eq:cond_group_b} converges in probability to zero, we have that Equation~\eqref{eq:est_eqs_all} defines a multidimensional estimating equation. Therefore, the estimates \((\hat{\gamma}^{(k)},\hat{\eta}^{(k)})\) for each \(k=1,\dots,K\) are asymptotically multivariate normal and unbiased for parameters \(\gamma\) and \(\eta\) by Section~5.3 by~\citet{van2000asymptotic}. It remains to determine the covariance matrix.

In the general setting for statistical models, the sandwich estimator can be used to robustly estimate the variance matrix for estimates that are the solutions to estimating equations as in Section~2.6 by~\citet{wakefield2013bayesian}. The sandwich estimator requires deriving \(B = Var(U(\theta))\) and \(A = \partial U(\theta)/\partial\theta \) where \(U(\cdot)\) is an estimating function for parameter \(\theta.\) The similarity here is with matrices \(A^{(k)}\) and \(B^{(k)}\) in Equations~\eqref{eq:group_A} and~\eqref{eq:group_B} which are the sequential versions of such objects for the conditional score. \citet{wakefield2013bayesian} prove the asymptotic normality of estimates obtained using estimating equations and the asymptotic convergence of the sandwich estimator. The proof by \citet{wakefield2013bayesian} applies to scalar estimates and one-dimensional estimating functions so we shall extend this proof to derive the covariance matrix for the vector of parameters in the joint model as oppose to the scalar sandwich variance.

Following \citet{wakefield2013bayesian}, we apply the standard Taylor expansion results to each row of Equation~\eqref{eq:est_eqs_all} and aggregate the results. Details of this calculation are found in the supplementary materials. Let \(\bar{\mathbf{A}}\) be the block diagonal matrix whose \(k^{th}\) diagonal matrix is the \(p\times p\) matrix \(\mathbf{A}^{(k)}.\) Then we obtain
\[-n^{-\frac{1}{2}}\bar{\mathbf{A}}^{-1}\begin{bmatrix}U_c(1, \gamma,\eta,\sigma^2) \\ \vdots \\ U_c(K, \gamma,\eta,\sigma^2) \end{bmatrix}  = \bar{\mathbf{A}}^{-1}\begin{bmatrix} n^{-1} \frac{\partial}{\partial(\gamma,\eta)^T} U_c(1,\gamma,\eta, \sigma^2)|_{(\gamma^{*(1)}\eta^{*(1)})} \\ \vdots \\n^{-1} \frac{\partial}{\partial(\gamma,\eta)^T} U_c(K,\gamma,\eta, \sigma^2)|_{(\gamma^{*(K)}\eta^{*(K)})}\end{bmatrix}  \cdot n^{\frac{1}{2}}\begin{bmatrix}
\hat{\gamma}_n^{(1)} -\gamma_0 \\
\hat{\eta}_n^{(1)} -\eta_0 \\
\vdots \\
\hat{\gamma}_n^{(K)} -\gamma_0 \\
\hat{\eta}_n^{(K)} -\eta_0 \\
\end{bmatrix}\]
where \((\gamma^{*(k)}\eta^{*(k)})\) lies on the line segment between \((\gamma_0,\eta_0)\) and \((\hat{\gamma}^{(k)}_n,\hat{\eta}^{(k)}_n)\). 

Suppose that we could show that
\begin{align}
\label{eq:condition1}
&n^{-1}\frac{\partial}{\partial(\gamma,\eta^T)^T}U_c^{(n)}(k,\gamma,\eta,\sigma_0^2)\bigg\rvert_{\gamma=\gamma^*_n, \eta = \eta^*_n} \xrightarrow{p}\mathbf{A}^{(k)}\hspace{2cm}\text{and} \\
\label{eq:condition2}
&-n^{-\frac{1}{2}}\begin{bmatrix}U_c(1, \gamma,\eta,\sigma^2) \\ \vdots \\ U_c(K, \gamma,\eta,\sigma^2) \end{bmatrix} \xrightarrow{d} N\left( \begin{bmatrix} 0 \\ \vdots \\ 0\end{bmatrix} , \begin{bmatrix} \mathbf{B}^{(1)} & \cdots & \mathbf{B}^{(1)} \\ \vdots & \ddots & \vdots \\ \mathbf{B}^{(1)} & \cdots & \mathbf{B}^{(K)}\end{bmatrix} \right),
\end{align}
then by an application of Slutsky's Theorem, we have the desired result. 

A simple application of the triangle inequality proves that condition~\eqref{eq:condition1} holds. The proof of convergence in probability closely follows the standard results for survival data seen in Theorem~VII.2.2 by~\citet{andersen1982cox} and the details of this step are found in the supplementary materials.

To prove condition~\eqref{eq:condition2} holds, first note that Expression~\eqref{eq:cond_group_a} is a sum over \(n\) independent and identically distributed terms. Then the multivariate Central Limit Theorem can be applied to the vector \((U_c(1,\gamma,\eta,\sigma^2), \dots, U_c(K,\gamma,\eta,\sigma^2))\) to establish asymptotic normality. It then remains to show that \(Cov(U_c(k_1,\gamma,\eta,\sigma^2), U_c(k_2,\gamma,\eta,\sigma^2))\xrightarrow{p} \mathbf{B}^{(k_1)}\) for \(k_1\leq k_2.\)

We now follow a similar structure to the partial likelihood function for survival data given by \cite{jennison1997group} and we create a new counting process that allows the conditional score statistic to be written as the sum of distinct increments. This counting process is defined by \(DN_i(k,t)=N_i(k,t)-N_i(k-1,t)\) for \(k=1,\dots,K\).The corresponding compensated counting process is therefore given by
\[DM_i(k,t) = DN_i(k,t)-\int_0^th_0(u)E_{0i}(t,\gamma,\eta,\sigma^2)(Y_i(k,u)-Y_i(k-1,u))du\]
for \(k=1,\dots,K\) and \(DM_i(0,t) = 0.\) The event for patient \(i\) can only occur in one interval, and therefore we have \(N_i(k,t)=\sum_{l=1}^k DN_i(l,t)\). For consistency with~\citet{andersen1982cox} and ~\citet{jennison1997group}, let Expression~\eqref{eq:cond_group_a} be denoted by \(W^{(n)}_j(k, \tau_k, \gamma,\eta,\sigma^2)\). Then by the definition of the difference in counting process, we have
\[W^{(n)}_j(k, \tau_k, \gamma,\eta,\sigma^2) = \int_0^\tau \sum_{i=1}^n  \sum_{l=1}^kH_{ij}^{(n)}(l,u, \gamma,\eta,\sigma^2)dDM_i(l,u)\]
where
\[H_{ij}^{(n)}(l,u,\gamma,\eta,\sigma^2) = n^{-\frac{1}{2}}(\{S_i(k,u, \gamma,\sigma^2),Z_i\}^T_j-e_c(l,u,\gamma,\eta,\sigma^2)_j)\]
and \(W^{(n)}_j(k, \tau_k, \gamma,\eta,\sigma^2)\) has expectation zero by a simple manipulation of Equation~\eqref{eq:exp_0}.

 The main result, with details given  in the supplementary materials, is as follows
\begin{align}
\notag
&Cov\left(W^{(n)}_{j_1}(k_1,\tau_{k_1}, \gamma_0,\eta_0,\sigma_0^2),W^{(n)}_{j_2}(k_2,\tau_{k_2}, \gamma_0,\eta_0,\sigma_0^2)\right) \\
\label{eq:cov}
=&\mathbb{E}\left(\sum_{i=1}^n\int_0^{\tau_{k_1}}H_{ij_1}^{(n)}(k_1,u, \gamma_0,\eta_0,\sigma_0^2)H_{ij_2}^{(n)}(k_1,u, \gamma_0,\eta_0,\sigma_0^2)h_0(u)E_{0i}(u,\gamma_0,\eta_0,\sigma_0^2)Y_i(k_1,u)du\right).
\end{align}

To complete the proof, it remains to show that Equation~\eqref{eq:cov} converges in probability to \(\mathbf{B}^{(k_1)}\) and we shall do so by using the definitions of the objects \(s_c^{(j)}\) and \(c^{(j)}\) for \(j=1,0,1,2\) and \(e_c,v^{(1)}_c\) and \(v^{(2)}\) given in Appendix~1. We are assuming that these limits exist by Conditions~\ref{conditions:conditions_joint_group} and we shall exploit the relationships between these terms. This working is exactly the same as for standard survival data given by~\citet{andersen1982cox}, with the details given in the supplementary materials, and we see that
\begin{align*}
&Cov\left(W^{(n)}_{j_1}(k_1,\tau_{k_1}, \gamma_0,\eta_0,\sigma_0^2),W^{(n)}_{j_2}(k_2,\tau_{k_2}, \gamma_0,\eta_0,\sigma_0^2)\right) \\
\xrightarrow{p} &\left(\int_0^\infty v^{(1)}_c(k_1,u, \gamma_0,\eta_0,\sigma^2_0)s^{(0)}_c(k_1,u, \gamma_0,\eta_0,\sigma^2_0)h_0(u)du\right)_{j_1j_2} = B^{(k_1)}_{j_1j_2}.
\end{align*}

By the argument that the behaviour of the estimates \(\hat{\gamma}^{(k)}\) and \(\hat{\eta}^{(k)}\) is dictated by Expression~\eqref{eq:cond_group_a}, we have
\[Cov\left(U_c(k_1,\gamma_0,\eta_0,\sigma^2_0),U_c(k_2,\gamma_0,\eta_0,\sigma^2_0)\right) \xrightarrow{p} \mathbf{B}^{(k_1)} \hspace{1cm}\text{ for } k_1 \leq k_2\]
which is the result.
\end{proof}

Similar to the fixed sample case, we have assumed that \(\sigma^2_0\) is known in the derivation of the distribution of \(\hat{\gamma}^{(k)}_n\) and \(\hat{\eta}^{(k)}_n.\) This is not generally the case but by arguments in~\citet{carroll2006measurement} Section A.3.3, we can find a consistent estimate to replace \(\sigma^2_0\) with in the group sequential conditional score function. At analysis \(k\) this estimate is given by
\begin{equation}
\label{eq:sigma_group}
\hat{\sigma}^{(k)2}=\frac{\sum_{i=1}^{n} \mathbb{I}\{m_i(k)>2\}R_i(k)}{\sum_{i=1}^{n}\mathbb{I}\{m_i(k)>2\}(m_i(k)-2)},
\end{equation}
where \(R_i(k)\) is the residual sum of squares for the least squares fit to all \(m_i(k)\) observations for patient \(i\) available at analysis \(k\).

\subsection{A group sequential trial design based on the conditional score analysis}
We shall use Theorem~\ref{theorem:joint_group} to create a group sequential test based on the joint model. Let \(\gamma_0,\eta_0\) and \(\sigma^2_0\) be the true values of the parameters \(\gamma,\eta\) and \(\sigma^2\) respectively. Using the group sequential conditional score method, let \(\hat{\gamma}^{(k)}, \hat{\eta}^{(k)}\) be the values of the parameters \(\gamma\) and \(\eta\) such that \(U_c(k,\gamma,\eta,\sigma^2)=0\) where the conditional score function is calculated using Equation~\eqref{eq:cond_score_k}. Further, let \(\hat{\sigma}^{(k)2}\) be the estimate for \(\sigma^2\) given in Equation~\eqref{eq:sigma_group}. By Theorem~\ref{theorem:joint_group}, for each \(k=1,\dots,K\), the marginal distribution of the parameter \(\hat{\eta}^{(k)}\) is
\[\sqrt{n}(\hat{\eta}^{(k)}-\eta_0) \xrightarrow{d} N(0, \Sigma^{(k)}_{22})\]
where 
\[\Sigma^{(k)}=(A^{(k)})^{-1}B^{(k)}((A^{(k)})^{-1})^T\]
and the matrices \(A^{(k)}\) and \(B^{(k)}\) are defined by Equations~\eqref{eq:group_A}--\eqref{eq:group_B}. Note that the subscript notation in the covariance matrix represents that the parameter \(\eta\) is the second parameter in the vector \((\gamma,\eta)\). In the more general case where \(Z_i\) is a vector of dimension \(p\) including other covariates appart from treatment indicator, then the information would be indexed by the corresponding dimension of the vector \((\gamma, \eta)^T\) which relates to the treatment effect. The matrices \(A^{(k)}\) and \(B^{(k)}\) are estimated using
\begin{align}
\label{eq:A_hat_k}
\hat{A}^{(k)} &= \frac{1}{n}\sum_{i=1}^{n} \int_0^\infty\left[\Gamma_i(k,u,\hat{\sigma}^{(k)2})-V_c^{(1)}(k,u,\hat{\gamma}^{(k)},\hat{\eta}^{(k)},\hat{\sigma}^{(k)2})+V_c^{(2)}(k,u,\hat{\gamma}^{(k)},\hat{\eta}^{(k)},\hat{\sigma}^{(k)2})\right]dN_i(k,u) \\
\label{eq:B_hat_k}
\hat{B}^{(k)} &= \frac{1}{n}\sum_{i=1}^{n} \int_0^\infty\left[V_c^{(1)}(k,u,\hat{\gamma}^{(k)},\hat{\eta}^{(k)},\hat{\sigma}^{(k)2})\right]dN_i(k,u).
\end{align}

The information matrix at analysis \(k\) of the group sequential trial is given by
\[\mathcal{I}_k=\frac{1}{n}\left[ (\hat{A}^{(k)})^{-1}\hat{B}^{(k)}((\hat{A}^{(k)})^{-1})^T\right]^{-1}_{22}.\]
Further, a standardised test statistic at analysis \(k\) is given by
\[Z_k=\hat{\eta}^{(k)}\sqrt{\mathcal{I}_k}.\]

\section{Designing group sequential trials when the canonical joint distribution does not hold}
\label{sec:non_canonical}

\subsection{Simulation evidence that the trial is conservative with respect to type 1 error rate}
\label{subsec:non_canonical_simulation}

We discuss the implications for a group sequential test when the sequence of test statistics does not follow the usual canonical joint distribution and we shall show that in some scenarios, it is appropriate to make this assumption anyway and proceed with the trial as though the canonical joint distribution holds anyway. Let \(\hat{\theta}^{(1)},\dots,\hat{\theta}^{(K)}\) be the sequence of treatment effect estimates in a group sequential trial from data available at analyses \(1,\dots,K\) respectively and \(\mathcal{I}_1, \dots, \mathcal{I}_K\) are the associated observed information levels. The ``canonical joint distribution" for the sequence of estimates \(\hat{\theta}^{(1)},\dots, \hat{\theta}^{(K)}\) is such that
\begin{enumerate}
\item \((\hat{\theta}^{(1)},\dots,\hat{\theta}^{(K)})\) is multivariate normal
\item \(\hat{\theta}^{(k)}\sim  N(\theta,\mathcal{I}_k^{-1}), \hspace{10mm} 1\leq k\leq K\)
\item \(Cov(\hat{\theta}^{(k_1)},\hat{\theta}^{(k_2)})=\mathcal{I}_{k_2}^{-1},  \hspace{10mm} 1\leq k_1 \leq k_2 \leq K.\)
\end{enumerate}
Supposing that the canonical joint distribution holds for the sequence of treatment effect estimates in the joint model, then the library of standard group sequential designs, such as \citet{pocock1977group} and \citet{o1979multiple}, could be directly applied in our setting 

In Section~\ref{sec:joint} we saw that asymptotically the sequence of treatment effect estimates in a group sequential trial obtained from the joint model is multivariate normally distributed. Further, each of these estimates is asymptotically unbiased. The first two conditions for the canonical distribution of the sequence of test statistics are satisfied. However, for the joint model, we have shown that 
\begin{equation}
\label{eq:joint_treat_var}
Var(\hat{\eta}^{(k)}) = \left[(A^{(k)})^{-1}B^{(k)}(A^{(k)})^{-1}\right]_{22} \text{ for } k=1,\dots,K 
\end{equation}
and
\begin{equation}
\label{eq:joint_treat_cov}
Cov(\hat{\eta}^{(k_1)}, \hat{\eta}^{(k_2)}) = \left[A^{(k_1)})^{-1}B^{(k_1)}(A^{(k_2)})^{-1}\right]_{22} \text{ for } k_1 < k_2.
\end{equation}
This implies that the third condition for the canonical joint distribution, which is the Markov property, is not satisfied. In this section, we discuss the implications of performing a group sequential trial when the assumption of a canonical joint distribution fails. We first show that there are some small differences between the matrices \((A^{(k)})^{-1}B^{(k)}\) and the identity matrix and describe why this difference is important. Then, we discuss some alternative methods which aim to correct for this violation of the Markov property.  In method 1, the trial is performed acting as though the canonical joint distribution holds, and we present some theory that this method controls the type 1 error rate conservatively.

The theory presented is for the case when a non-binding futility boundary is used which is where stopping for futility at an interim analysis is not mandatory as described in \citet{guidance2018adaptive}. FDA guidance recommends using non-binding futility rules because if binding rules are employed and not followed, then type 1 error rates are inflated. The calculation of the type 1 error rate therefore only depends on the upper boundary \(b_1, b_2\) which is depicted in Figure~\ref{fig:boundaries}.

The problem of not obtaining the canonical joint distribution is not unique to the conditional score method and the proof that this assumption holds is not always trivial. \citet{slud1982two} design a group sequential test which uses the modified-Wilcoxon statistic for two-sample survival data. The authors show that the increments in test statistics are correlated and their alternative proposed method has a similar structure to our method 2. Then, \citet{gu1993sequential} present a score process for the analysis of regression data under general right censorship and they implement the repeated significance method of \citet{slud1982two} for the sequential analysis of such data. Further, \citet{cook1996interim} discuss a modification of the error spending approach (also known as \citet{gordon1983discrete}) for sequences of treatment effect estimates that do not have the independent increments structure.

If the relationship \(A^{(k)}=B^{(k)}\) holds for each \(k=1,\dots,K,\) then
\[Cov(\hat{\eta}^{(k_1)},\hat{\eta}^{(k_2)})=\left[(A^{(k_2)})^{-1}\right]_{p+1,p+1}=Var(\hat{\eta}^{(k_2)})\]
and the third condition holds. Therefore, we shall measure the magnitude of the violation by considering the matrix \((A^{(k)})^{-1}B^{(k)}\) and to what extent it differs from the identity matrix. By definition, the variance of the estimate \(\hat{\eta}^{(k)}\) is found in the bottom right entry of the matrix \((\hat{A}^{(k)})^{-1}\hat{B}^{(k)}(\hat{A}^{(k)})^{-1}\) and hence, the bottom row of \((\hat{A}^{(k)})^{-1}\hat{B}^{(k)}\) is of interest here. 

We can find estimates \(\hat{A}^{(k)}\) and \(\hat{B}^{(k)}\) from simulated data. We have done this using a large sample size of n=4800 patients to reduce noise in these estimates. This calculation is computationally expensive and takes roughly two hours to compute for this value of \(n\). This is appropriate because, although both matrices depend on the sample size \(n\), they can each be written in the form \(\hat{A}^{(k)}=(1/n) \sum_{i=1}^n X_i(k,\gamma,\eta)\) and \(\hat{B}^{(k)}=(1/n)\sum_{i=1}^nY_i(k,\gamma,\eta)\) for some functions \(X_i(k,\gamma,\eta)\) and \(Y_i(k,\gamma,\eta).\) Therefore, in the formula \((\hat{A}^{(k)})^{-1}\hat{B}^{(k)}\), the value of \(n\) cancels out, and we are left with a function that converges in distribution to \((A^{(k)})^{-1}B^{(k)}\) as \(n\rightarrow \infty.\) Further, to reduce simulation error, the true values of \(\gamma\) and \(\eta\) are used in this calculation, which is appropriate because of consistency of the estimates \(\hat{\gamma}\) and \(\hat{\eta}\). 

Table~\ref{tbl:A_invB_null} shows the matrix \((\hat{A}^{(1)})^{-1}\hat{B}^{(1)}\) for \(\eta=0\) and different values of \(\gamma\) and \(\sigma^2.\) We have chosen to investigate the properties of this matrix at the first analysis because we see empirically that the majority of problems occur at early interim analyses. We simulated a data set with parameter values \(\gamma=0,0.03,0.06,0.09,\sigma^2=0,1,10,100\) and \(\eta=0.\) 
\begin{table}
\centering
\begin{threeparttable}
  \begin{tabular}{lcccc}
&\multicolumn{4}{c}{$\bs{\gamma}$} \\
$\mathbf{\sigma^2}$ & $\mathbf{0}$  & $\mathbf{0.03}$  & $\mathbf{0.06}$  & $\mathbf{0.09}$ \\[5pt]$\mathbf{0}$&$\left(\begin{array}{cc}1.00&0.00\\0.00&1.00\end{array}\right)$&$\left(\begin{array}{cc}1.00&0.00\\0.00&1.00\end{array}\right)$&$\left(\begin{array}{cc}1.00&0.00\\0.00&1.00\end{array}\right)$&$\left(\begin{array}{cc}1.00&0.00\\0.00&1.00\end{array}\right)$\\\rule{0pt}{4.5ex}
$\mathbf{1}$&$\left(\begin{array}{cc}1.00&0.00\\0.00&1.00\end{array}\right)$&$\left(\begin{array}{cc}1.01&0.00\\0.00&1.00\end{array}\right)$&$\left(\begin{array}{cc}1.01&0.00\\0.00&1.00\end{array}\right)$&$\left(\begin{array}{cc}1.02&0.00\\0.00&1.00\end{array}\right)$\\\rule{0pt}{4.5ex}
$\mathbf{10}$&$\left(\begin{array}{cc}1.03&0.00\\0.01&1.00\end{array}\right)$&$\left(\begin{array}{cc}1.06&0.00\\0.00&1.00\end{array}\right)$&$\left(\begin{array}{cc}1.12&0.00\\-0.02&1.00\end{array}\right)$&$\left(\begin{array}{cc}1.22&0.00\\0.00&1.00\end{array}\right)$\\\rule{0pt}{4.5ex}
$\mathbf{100}$&$\left(\begin{array}{cc}1.28&0.00\\-0.10&1.00\end{array}\right)$&$\left(\begin{array}{cc}1.63&0.00\\-0.47&1.00\end{array}\right)$&$\left(\begin{array}{cc}2.32&0.00\\-0.01&1.00\end{array}\right)$&$\left(\begin{array}{cc}3.49&0.00\\1.00&1.00\end{array}\right)$
\end{tabular}
  \begin{tablenotes}
  \item Parameter values \(\gamma=0,0.03,0.06,0.09\) and \(\sigma^2=0,1,10,100\) in the joint model  simulated with 4800 patients.
  \end{tablenotes}
\end{threeparttable}
\caption{Matrix \((\hat{A}^{(1)})^{-1}\hat{B}^{(1)}\) for the null hypothesis \(\eta = 0\).}
\label{tbl:A_invB_null}
  \end{table}

The matrices \(A^{(k)},\hat{A}^{(k)},B^{(k)}\) and \(\hat{B}^{(k)}\) are each of dimension \(2\times 2.\) The function \(V^{(2)}(k,u,\gamma,\eta)\) is such that \(\left[V^{(2)}(k,u,\gamma,\eta)\right]_{12}= \left[V^{(2)}(k,u,\gamma,\eta)\right]_{22} = 0\), and hence by Equations~\eqref{eq:A_hat_k} and~\eqref{eq:B_hat_k}  it can be shown that \([\hat{A}^{(k)}]_{12}=[\hat{B}^{(k)}]_{12}\) and \([\hat{A}^{(k)}]_{22}=[\hat{B}^{(k)}]_{22}.\) Further simple algebraic manipulation gives \([(\hat{A}^{(k)})^{-1}\hat{B}^{(k)}]_{12}=0\) and \([(\hat{A}^{(k)})^{-1}\hat{B}^{(k)}]_{22}=1\) exactly, which is shown in Table~\ref{tbl:A_invB_null}. The fact that \([(\hat{A}^{(1)})^{-1}\hat{B}^{(1)}]_{11}\) is a long way from \(1\) for \(\sigma^2=10\) and \(\sigma^2=100\) is therefore not a problem. As \(\sigma^2\) increases, the absolute value of \([(\hat{A}^{(1)})^{-1}\hat{B}^{(1)}]_{21}\) increases, but the value of \(\gamma\) has a small impact on the value of \([(\hat{A}^{(1)})^{-1}\hat{B}^{(1)}]_{21}.\) Thus, we may expect large values of \(\sigma^2\) to affect the achieved type 1 error rate.


\subsection{Method 1 - canonical joint distribution assumed}
\label{subsec:non_canonical_proof}
We consider alternative methods for creating a group sequential trial when the canonical joint distribution does not hold. In the first method, we construct the group sequential test by estimating \(Var(\hat{\eta}^{(k)}),k=1,\dots,K\) from the data and supposing \(Cov(\hat{\eta}^{(k_1)},\hat{\eta}^{(k_2)})\) for \(k_1<k_2\) are as specified in the canonical joint distribution. We shall prove that we have type 1 error rate less than \(\alpha\) and we also show, through simulation, that this method performs satisfactorily in practice with error rates diverging minimally from planned significance and power.

For this proof, we consider the case where \(K=2\) and the futility boundary is non-binding. We present a sketch proof in the supplementary materials for the case \(K=3\) and we believe that the results generalise for cases \(K > 3.\) To prove that the type 1 error rates are conservative, we shall compare the probabilities of crossing the boundaries of a group sequential trial for two sequences of treatment effect estimates; one where the canonical joint distribution does not hold and one where this assumption does hold. For a group sequential trial with \(K=2\), suppose that \(\hat{\eta}_1,\hat{\eta}_2\) are the sequence of treatment effect estimates that are calculated using the conditional score method. Let the true variance-covariance matrix for this sequence of estimates be \(\Sigma\). Proceeding using method 1, we let the information levels be calculated as \(\mathcal{I}_k = (\Sigma_{kk})^{-1}\) and the \(Z-\)statistic is given by \(Z_k=\hat{\eta}_k\sqrt{\mathcal{I}_k}\) for \(k=1,2.\) Under \(H_0\) we have
\begin{equation}
\label{eq:Z_dist}
\begin{bmatrix} \hat{\eta}_1 \\ \hat{\eta}_2\end{bmatrix} \sim N_K\left[ \left(\begin{array}{c} 0 \\ 0 \end{array}\right),\left( \begin{array}{cc}
\Sigma_{11} & \Sigma_{12}  \\
\Sigma_{12} & \Sigma_{22}
\end{array} \right) \right], \hspace{1cm}
\begin{bmatrix} Z_1 \\ Z_2\end{bmatrix} \sim N_K\left[ \left(\begin{array}{c} 0 \\ 0 \end{array}\right), \left( \begin{array}{cc}
1 & \rho\\
\rho & 1
\end{array} \right) \right]
\end{equation}
where the correlation parameter \(\rho\) is given by
\begin{equation}
\label{eq:rho}
\rho=Cov(Z_1,Z_2)=\frac{\Sigma_{12}}{\sqrt{\Sigma_{11}\Sigma_{22}}}.
\end{equation}

Suppose instead, that we have a different sequence of treatment effect estimates \(\hat{\eta}^*_1,\hat{\eta}^*_2\) with distribution given below. The values of \(\Sigma_{11}\) and \(\Sigma_{22}\) are the same as in~\eqref{eq:Z_dist} and using the same information levels given by \(\mathcal{I}_k = (\Sigma_{kk})^{-1}\), we define the standardised statistics \(Z^*_k=\hat{\eta}^*_k\sqrt{\mathcal{I}_k}\) for \(k=1,\dots,K.\) The joint distribution of the sequence \(\hat{\eta}^*_1,\hat{\eta}^*_2\) and the distribution of the \(Z-\)statistics are given by
\begin{equation}
\label{eq:Z_star_dist}
\begin{bmatrix} \hat{\eta}_1^* \\ \hat{\eta}_2^* \end{bmatrix} \sim N_K\left[ \left(\begin{array}{c} 0 \\ 0 \end{array}\right),\left( \begin{array}{cc}
\Sigma_{11} & \Sigma_{22}  \\
\Sigma_{22} & \Sigma_{22}
\end{array} \right) \right], \hspace{1cm}
\begin{bmatrix} Z_1^* \\ Z_2^* \end{bmatrix} \sim N_K\left[ \left(\begin{array}{c} 0 \\ 0 \end{array}\right), \left( \begin{array}{cc}
1 & \rho^*  \\
\rho^* & 1
\end{array} \right) \right]
\end{equation}
with correlation parameter
\begin{equation}
\label{eq:rho_star}
\rho_{12}^*=Cov(Z_1^*,Z_1^*)=\sqrt{\frac{\Sigma_{22}}{{\Sigma_{11}}}}.
\end{equation}
This sequence of treatment effect estimates \(\hat{\eta}^*_1,\hat{\eta}^*_2\) therefore has the canonical joint distribution and \(Z^*_1,Z^*_2\) has the canonical joint distribution for a sequence of \(Z\)-statistics, with information levels \(\mathcal{I}_k=1/\Sigma_{kk}\) for \(k=1,\dots,K\).

The upper boundary points \(b_1,b_2\) are calculated under the assumption that the canonical joint distribution holds so as to give a group sequential test with the correct type 1 error rate \(\alpha\). Hence, we have that 
\[\mathbb{P}_{\eta=0}(Z^*_1  > b_1 \cup Z^*_2 > b_2) =\alpha.\]
We consider the probability of rejecting \(H_0\) when we apply this boundary to the sequence \(Z_1,Z_2.\) We aim to prove that 
\begin{equation}
\label{eq:Z_alpha}
\mathbb{P}_{\eta=0}(Z_1  > b_1 \cup Z_2 > b_2) \leq \alpha.
\end{equation}

The following conditions are needed for the proof that type 1 error rate is conservative. 
\begin{conditions}
\label{conditions:efficiency}\
\begin{enumerate}
\item The upper boundary of a group sequential trial, on the \(Z-\)scale, is such that
\(b_1 \geq b_2 \geq 0.\)
\item We have 
\(\Sigma_{12} \geq \Sigma_{22}.\)
\end{enumerate}
\end{conditions}
Condition 1 of Conditions~\ref{conditions:efficiency} holds under common error spending functions with increasing information sequences, which are most often used in practice. Condition 2 of Conditions~\ref{conditions:efficiency} should be checked by simulation before proceeding with the analysis. To do so, the investigator would choose sensible values for all the parameters in the joint model, simulate a large dataset of \(4800\) patients using these parameter values, and calculate an estimate for the variance-covariance matrix \(\Sigma\) for the sequence of estimates \(\hat{\eta}^{(1)},\dots,\hat{\eta}^{(K)}.\) We have found that calculations for various examples has always lead to condition 2 being satisfied. Further, the scenarios that we have checked span a 3-dimensional grid of \(\eta,\gamma\) and \(\sigma^2\) values each ranging from small to large and hence, we believe that the scenarios we have checked span a suitable range of the parameter values. In the rare event that this condition does not hold, a solution is to employ method 2, which will be described later. It can also be seen by simple algebraic manipulation that condition 2 implies \(\rho\geq \rho^*.\)

The following theorem shows that Equation~\eqref{eq:Z_alpha} holds. 
\begin{theorem}
\label{theorem:efficiency}
Let \(Z_1,Z_2\) be the standardised statistics of a group sequential trial with distribution given by~\eqref{eq:Z_dist} and let \(Z^*_1,Z^*_2\) be the statistics with distribution given by~\eqref{eq:Z_star_dist}. Let \(\alpha\) be the planned type 1 error rate and suppose that \(b_1,b_2\) are the upper boundary points on the \(Z\)-scale such that
\[\mathbb{P}_{\eta=0}(Z^*_1  > b_1 \cup Z^*_2 > b_2 ) =\alpha.\]
Suppose that Conditions~\ref{conditions:efficiency} hold. Then the Type 1 error rate when applying the boundary for \(Z^*_1,\dots,Z^*_K\) to \(Z_1,\dots,Z_K\) is
\[\mathbb{P}_{\eta=0}(Z_1  > b_1 \cup Z_2 > b_2) \leq \alpha.\]
\end{theorem}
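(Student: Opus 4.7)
The plan is to reduce the claim to the classical monotonicity of the bivariate normal CDF in its correlation parameter. I would first express both rejection probabilities via their complements: writing $\Phi_2(\cdot,\cdot;r)$ for the standard bivariate normal CDF with correlation $r$,
\begin{align*}
\mathbb{P}_{\eta=0}(Z_1 > b_1 \cup Z_2 > b_2) &= 1 - \Phi_2(b_1,b_2;\rho),\\
\mathbb{P}_{\eta=0}(Z_1^* > b_1 \cup Z_2^* > b_2) &= 1 - \Phi_2(b_1,b_2;\rho^*) = \alpha,
\end{align*}
so that proving the theorem is equivalent to showing $\Phi_2(b_1,b_2;\rho) \ge \Phi_2(b_1,b_2;\rho^*)$ at the same pair of thresholds.

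Next, I would use Condition~2 together with \eqref{eq:rho} and \eqref{eq:rho_star} to establish $\rho \ge \rho^*$. A direct subtraction gives
\[\rho - \rho^* = \frac{\Sigma_{12}}{\sqrt{\Sigma_{11}\Sigma_{22}}} - \sqrt{\frac{\Sigma_{22}}{\Sigma_{11}}} = \frac{\Sigma_{12}-\Sigma_{22}}{\sqrt{\Sigma_{11}\Sigma_{22}}} \ge 0,\]
using $\Sigma_{11},\Sigma_{22}>0$.

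Finally, I would appeal to Plackett's identity,
\[\frac{\partial}{\partial r}\Phi_2(a_1,a_2;r) = \phi_2(a_1,a_2;r) > 0,\]
which shows that for any fixed $(a_1,a_2)$ the map $r \mapsto \Phi_2(a_1,a_2;r)$ is strictly increasing on $(-1,1)$. Applied with $(a_1,a_2)=(b_1,b_2)$, together with the already-established $\rho \ge \rho^*$, this yields the desired inequality and hence \eqref{eq:Z_alpha}.

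The main obstacle is really conceptual rather than technical: once one sees that the problem is a comparison of two bivariate Gaussian CDFs at the same point but with different correlations, Plackett's identity (equivalently a special case of Slepian's inequality) closes the argument immediately. Notably, for $K=2$ the ordering $b_1 \ge b_2 \ge 0$ of Condition~1 is not actually required; it appears to be a regularity assumption reflecting standard error-spending practice. Where I would expect real difficulty is in the sketched extension to $K\ge 3$ in the supplementary material, because then the covariance matrix of $(\hat\eta^{(1)},\ldots,\hat\eta^{(K)})$ differs from its canonical counterpart in several off-diagonal entries simultaneously, and a direct one-parameter monotonicity argument no longer suffices; one needs a Slepian-type comparison across all pairs, and Condition~1 becomes load-bearing to ensure the single-parameter deformation between the two correlation structures stays inside the region where boundary-crossing probabilities are monotone.
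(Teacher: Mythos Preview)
Your proof is correct and takes a genuinely different route from the paper. The paper reduces, via inclusion--exclusion, to showing $\mathbb{P}_{\eta=0}(Z_1^*>b_1\cap Z_2^*>b_2)\le \mathbb{P}_{\eta=0}(Z_1>b_1\cap Z_2>b_2)$, and then conditions on $Z_1=z_1$, writing each joint probability as an integral of a conditional normal tail. It establishes the comparison \emph{pointwise} in $z_1$ by showing that $(b_2-\rho z_1)/\sqrt{1-\rho^2}\le (b_2-\rho^* z_1)/\sqrt{1-\rho^{*2}}$ for every $z_1\ge b_1$. That pointwise bound is \emph{not} true for arbitrary $z_1$; it is precisely the ordering $0\le b_2\le b_1\le z_1$ of Condition~1 that makes it work in the paper's algebraic step. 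By contrast, you operate at the integrated level via Plackett's identity (equivalently a two-dimensional Slepian comparison), which holds for any threshold pair $(b_1,b_2)$; this is why you correctly observe that Condition~1 is superfluous when $K=2$. Your argument is shorter and strictly stronger in its hypotheses; the paper's argument is more self-contained in that it avoids citing Plackett/Slepian, at the price of the extra assumption. Your remarks about the $K\ge 3$ extension are also apt: there, the two correlation matrices differ in multiple entries and a one-parameter monotonicity no longer suffices, which is where sign or ordering constraints on the $b_k$ become genuinely load-bearing.
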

\begin{proof}
The problem is equivalent to proving that 
\[\mathbb{P}_{\eta=0}(Z_1  > b_1 \cup Z_2 > b_2) \leq \mathbb{P}_{\eta=0}(Z^*_1  > b_1 \cup Z^*_2 >).\]
and by another representation for the above probabilities, we aim to show that
\begin{align*}
&\mathbb{P}_{\eta=0}(Z_1>b_1) + \mathbb{P}_{\eta=0}(Z_1>b_2) - \mathbb{P}_{\eta=0}(Z_1>b_1\cap Z_2 > b_2)\leq \mathbb{P}_{\eta=0}(Z_1>b_1 \cap Z_2 > b_2)  \\
\leq & \mathbb{P}_{\eta=0}(Z^*_1>b_1) + \mathbb{P}_{\eta=0}(Z^*_1>b_2) - \mathbb{P}_{\eta=0}(Z^*_1>b_1\cap Z^*_2 > b_2)\leq \mathbb{P}_{\eta=0}(Z_1>b_1 \cap Z_2 > b_2).
\end{align*}
Under \(\eta=0\), the marginal distributions of \(Z_1,Z_2,Z^*_1\) and \(Z^*_2\) are equivalent and are all \(N(0,1)\) random variables and hence the probabilities are such that \(\mathbb{P}_{\eta=0}(Z_k > b_k)=\mathbb{P}_{\eta=0}(Z^*_k>b_k)\) for each \(k=1,2.\) Therefore, the problem is reduced to showing that 
\begin{equation}
\label{eq:K2}
\mathbb{P}_{\eta=0}(Z^*_1>b_1\cap Z^*_2 > b_2)\leq \mathbb{P}_{\eta=0}(Z_1>b_1 \cap Z_2 > b_2).
\end{equation}
when \(\rho^*\leq \rho.\)

In the below calculations, we appeal to the fact that for two random variables which are bivariate normally distributed, the conditional distribution of one normal random variable on the other normal random variable is also normal. Specifically, we have that \(Z_2|Z_1=z_1\sim N(\rho z_1, 1-\rho^2)\). Let \(\phi(\cdot)\) and \(\Phi(\cdot)\) denote the probability density function and cumulative distribution function of a standard normal random variable respectively, then the probability on the left hand side in Equation~\eqref{eq:K2} is 
\[\mathbb{P}_{\eta=0}(Z_1 >b_1 \cap Z_2>b_2) = \int_{b_1}^\infty \left[1-\Phi\left(\frac{b_2-\rho z_1}{\sqrt{1-\rho^2}}\right)\right]\phi(z_1)dz_1.\]

The corresponding calculation where \(Z^*_1\) and \(Z^*_2\) replace \(Z_1\) and \(Z_2\) yields the following
\[\mathbb{P}_{\eta=0}(Z^*_1>b_1 \cap Z^*_2 > b_2) =\int_{b_1}^\infty \left[1-\Phi\left(\frac{b_2-\rho^* z_1}{\sqrt{1-{\rho^*}^2}}\right)\right]\phi(z_1)dz_1\]
and since \(\Phi(\cdot)\) is strictly increasing, it suffices to show that whenever \(z_1>b_1\), then
\begin{equation}
\label{eq:K2_2}
\frac{b_2-\rho z_1}{\sqrt{1-{\rho}^2}}\leq\frac{b_2-\rho^* z_1}{\sqrt{1-{\rho^*}^2}}.
\end{equation}

We have by assumption that \(\rho^*\leq \rho.\) Further note that by definition \(0\leq \rho \leq 1\) and \(0\leq \rho^*\leq 1.\) The following shows a simple algebraic manipulation of this inequality, which gives
\[\rho^* \leq \rho \iff \frac{\sqrt{1-{\rho^*}^2}-\sqrt{1-\rho^2}}{\rho\sqrt{1-{\rho^*}^2}-\rho^*\sqrt{1-\rho^2}}\leq 1.\]

Finally, using the above inequality and Conditions~\ref{conditions:efficiency} that \(0\leq b_2 \leq b_1\), we have for \(z_1\geq b_1\) that
\[b_2 \frac{\sqrt{1-{\rho^*}^2}-\sqrt{1-\rho^2}}{\rho\sqrt{1-{\rho^*}^2}-\rho^*\sqrt{1-\rho^2}} \leq b_2\leq b_1\leq z_1\]
and a simple rearrangement shows that Equation~\eqref{eq:K2_2} is satisfied.
\end{proof}

Table~\ref{tbl:method1} show estimates of type 1 error rate. This was by simulating \(10^4\) data sets, each with a sample size \(n=365.\) This sample size was chosen because it gives power 0.9 for \(\gamma=0.06,\sigma^2=1,\phi=2.5\) and for other parameter values, the power was close to 0.9. For each data set, we calculate the estimates \(\hat{\eta}^{(1)},\dots,\hat{\eta}^{(K)}\) and estimates of the covariance matrices \(\Sigma^{(1)},\dots,\Sigma^{(K)}\) using estimates \(\hat{A}^{(1)},\dots,\hat{A}^{(K)},\hat{B}^{(1)},\dots,\hat{B}^{(K)}\) given by Equations~\eqref{eq:A_hat_k} and~\eqref{eq:B_hat_k}. The boundary points \(a_1,\dots,a_K\) and \(b_1,\dots,b_K\) are then calculated under the assumption that the canonical joint distribution holds and the type 1 error rate is calculated as the proportion of replicates that reject the null hypothesis. This simulation study is computationally expensive and with \(10^4\) replicates, there is noise in the simulation results. Taking this into account, the simulation results support the relevance of asymptotic theory since all empirical type 1 error rates are within 2 standard deviations of 0.025.
\begin{table}
\centering
\begin{threeparttable}
   \begin{tabular}{llrrrrrrrrr}
  \textbf{Method}&\(\bs{\phi}\)& \(\bs{\sigma^2}\) &\multicolumn{4}{c}{Type 1 error} &\multicolumn{4}{c}{Percentage of time method fails}\\
  &&&$\bs{\gamma}\mathbf{=0}$ &$\bs{\gamma}\mathbf{=0.03}$&$\bs{\gamma}\mathbf{=0.06}$&$\bs{\gamma}\mathbf{=0.09}$
  &$\bs{\gamma}\mathbf{=0}$ &$\bs{\gamma}\mathbf{=0.03}$&$\bs{\gamma}\mathbf{=0.06}$&$\bs{\gamma}\mathbf{=0.09}$
  \\[5pt]1&$\mathbf{2.5}$&$\mathbf{0}$&0.022&0.022&0.022&0.023&0.000&0.000&0.000&0.000\\1&$\mathbf{2.5}$&$\mathbf{1}$&0.024&0.025&0.026&0.023&0.000&0.000&0.000&0.000\\1&$\mathbf{2.5}$&$\mathbf{10}$&0.023&0.024&0.023&0.025&0.000&0.000&0.000&0.000\\1&$\mathbf{2.5}$&$\mathbf{100}$&0.023&0.028&0.026&0.024&0.000&0.000&0.000&0.000\\\rule{0pt}{3.5ex}2&$\mathbf{2.5}$&$\mathbf{0}$&0.022&0.022&0.022&0.023&0.000&0.000&0.000&0.000\\2&$\mathbf{2.5}$&$\mathbf{1}$&0.024&0.025&0.026&0.023&0.000&0.000&0.000&0.000\\2&$\mathbf{2.5}$&$\mathbf{10}$&0.023&0.024&0.023&0.025&0.040&0.100&0.120&0.060\\2&$\mathbf{2.5}$&$\mathbf{100}$&0.022&0.028&0.026&0.029&49.140&37.360&35.950&40.420\\\rule{0pt}{3.5ex}3&$\mathbf{2.5}$&$\mathbf{0}$&0.028&0.024&0.023&0.023&0.000&0.000&0.000&0.000\\3&$\mathbf{2.5}$&$\mathbf{1}$&0.028&0.026&0.027&0.023&0.000&0.000&0.000&0.000\\3&$\mathbf{2.5}$&$\mathbf{10}$&0.029&0.025&0.023&0.026&0.140&0.050&0.060&0.310\\3&$\mathbf{2.5}$&$\mathbf{100}$&0.024&0.028&0.027&0.029&1.930&1.950&1.840&1.850\\
  \end{tabular}
  \begin{tablenotes}
  \item Calculated using a simulation study with 365 patients and \(10^4\) replicates.
  \item Standard error is 0.0016 when the method does not fail.
  \end{tablenotes}
\end{threeparttable}
\caption{Type 1 error rates and percentage of times that each method fails.}
  \label{tbl:method1}
  \end{table}

\subsection{Method 2 - Use the complete structure of the covariance matrix}
The second method for dealing with estimates from the joint model does not rely on the canonical joint distribution assumption. Instead, the group sequential boundaries are calculated using the complete structure of the variance-covariance matrix for the sequence of treatment effect estimates across analyses. This differs from when the canonical joint distribution is assumed because in such a case, only the variances are required and the covariances are ignored. To calculate the boundary points of the group sequential test, we are required to calculate a \(K-\)dimensional integral over the joint probability distribution of the sequence of test statistics. This integration calculation can be performed numerically using the R package mvtnorm by \citet{genz2020package}.

However, this method poses some practical difficulties. For example, during the conditional score method, the variance-covariance matrix \(\Sigma\) in Equation~\eqref{eq:sigma} is estimated with error which can sometimes result in a non positive-definite estimate \(\hat{\Sigma}.\) In such a case, the boundary calculations cannot be performed. \citet{slud1982two} allude to a similar problem in their sequential analysis of modified-Wilcoxon scores for two-sample survival data. Table~\ref{tbl:method1} shows the percentage of times that this problem occurs. We see that for extremely noisey longitudinal data with \(\sigma^2=100\) this problem occurs roughly 40\% of the time and for \(\sigma^2=10\) we have the problem occurring infrequently. This is not a problem for small \(\sigma^2\) and in such a case, the type 1 error rates are as expected. In summary, this method makes no assumption about the covariance structure of the sequence of treatment effect estimates and therefore the type 1 error rate is preserved. 

\subsection{Method 3 - Create an asymptotically efficient estimate}
For the final method, we follow the approach of \citet{van2022combining} where a new estimator is created which reaches asymptotic efficiency as it is designed in such a way to minimise the variance. The efficient estimate at analysis \(k\) is a linear combination of the original estimates at analyses up to and including \(k\). We choose the weights of the linear combination using a Lagrange multiplier method in such a way that the variance is minimised. Ou rmotivation for this method follows the simple result by \citet{jennison1997group}, that all asymptotically efficient estimators have the canonical joint distribution. \citet{van2022combining} show that this new estimator has the correct canonical distribution, and hence the group sequential methods can be used without hesitation.

Similarly to method 2, there are some limitations to this method as it relies too heavily on accurately estimating the covariance matrix of the sequence of treatment effect estimates. \citet{van2022combining} do not observe these issues since their results focus on simulations with sample sizes much larger than our chosen \(n=365\). In some cases, because the variance-covariance matrix is estimated with error, it is possible to choose the weights of the Lagrange multiplier in such a way that the new estimate has negative variance. Table~\ref{tbl:method1} shows a similar pattern to method 2, that this numerical problem occurs when the variance of the longitudinal data is extreme so that \(\sigma^2=100\), occurs infrequently for \(\sigma^2=10\) and no problems occur for small \(\sigma^2.\) 

Given that \(Cov(\hat{\eta}^{(k_1)}, \hat{\eta}^{(k_2)})\) is very close to \(Var(\hat{\eta}^{(k_2)})\) for \(k_1< k_2\), there is not a lot to be gained by the more complex methods 2 and 3.  The calculations are sensitive to errors in estimates of \(Cov(\hat{\eta}^{(k_1)}, \hat{\eta}^{(k_2)})\) and \(Var(\hat{\theta}^{(k_2)})\). In some cases, the errors in these calculations lead to these methods simply not working. This raises doubts about how well they work in less extreme cases. Despite these problems, methods 2 and 3 perform adequately in simulation studies. However, this does not change our view that method 1 is preferable.

\section{Results}
\label{sec:results}

We aim to assess the efficiency gain when longitudinal data are included in the analysis compared to when this longitudinal data is available, yet ignored. In this case, we believe that our joint model is correctly specified and therefore, we shall simulate clinical trial data from the true joint model and analyse it in two separate ways. The first way is to fit the data to the joint model using the conditional score method to find a treatment effect estimate and the second way is to ignore the longitudinal data, fitting a Cox model to the survival data and find the maximum partial likelihood estimate of the treatment effect. We are interested in comparing the sample sizes required in each method to achieve the same power. A comparison of these sample sizes reflects the efficiency of the test incorporating the longitudinal data.

We shall simulate using the joint model in Equation~\eqref{eq:surv}. To fit the joint model to the data, we do not need to assume a distribution for the random effects, however we must specify this for simulation purposes and we shall simulate using the following
\[\begin{bmatrix} b_{i0} \\ b_{i1} \end{bmatrix}\sim N\left( \begin{bmatrix} \mu_0 \\ \mu_1 \end{bmatrix}, \begin{bmatrix} \phi_0^2 & 0 \\ 0 & \phi_1^2 \end{bmatrix}\right).\]
Unless otherwise stated, throughout the simulation studies, we shall use parameter values
\begin{equation}
\label{eq:values1}
(\mu_0,\mu_1)=(6,3), \phi_0^2=12.25, \phi_1^2=6.25, \sigma^2=10, h_0(t)=5.5, \gamma=0.03, \lambda=0.022 \text{ and }\eta=-0.5.
\end{equation}
These parameter values are based on the aids dataset in the R package JM by~\citet{rizopoulos2010jm}.

We shall test the one-sided hypothesis
\[H^{(J)}_0:\eta_J\geq 0, \hspace{1cm} H^{(J)}_A:\eta_J< 0.\]
Here the subscript notation \(J\) represents that the parameter \(\theta_J\) is from the "joint" model. We fit the joint model using the conditional score method to find a treatment effect estimate \(\hat{\eta}_J\) in order to perform this hypothesis test. 

Our aim is to find the sample size, \(n_J\), required using the conditional score method to achieve Type 1 error rate \(\alpha=0.025\) when the true treatment effect is \(\eta_J=0\) and power \(1-\beta=0.9\) when \(\eta_J=-0.5\). An estimate of the sample size will be calculated by simulation. The trial is designed with 2 years recruitment and 3 years follow-up and the group sequential trial has analyses at \(20,30,40,50\) and \(60\) months.  When increasing the sample size, we do so by increasing the rate of recruitment so accrual and follow-up periods in the the trial design stay fixed. This is to ensure that differences in power are purely due to the sample size and not changes in the trial design as sample size increases.

The trial uses an error spending design given by \citet{gordon1983discrete}. That is, the boundary constants \(b_1,\dots,b_K\) are chosen to satisfy
\begin{align*}
\mathbb{P}_{\eta=0}(Z_1 > b_1) & = \min\{\alpha(\mathcal{I}_1/\mathcal{I}_{max})^2,\alpha\} \\
\mathbb{P}_{\eta=0}(Z_1 < b_1,\dots,Z_{k-1}<b_{k-1},Z_k>b_k)&=\min\{\alpha(\mathcal{I}_k/\mathcal{I}_{max})^2,\alpha\} -\min\{\alpha(\mathcal{I}_{k-1}/\mathcal{I}_{max})^2,\alpha\} \text{ for } k=2,\dots,K
\end{align*}
where the value of \(\mathcal{I}_{max}\) is calculated to ensure that the trial has power \(1-\beta\) when \(\eta=-0.5\) as described by~\citet{jennison2000group}.

Maximum sample sizes \(n_J\) are given in Table~\ref{tbl:n_r_values}. The first analysis, at 20 months, occurs just before the end of the recruitment period which is 2 years. Trials that terminate at the first interim analysis may recruit less than \(n_J\) patients, however this occurs with very small probability. Hence, the expected sample size will be very close to the maximum sample size for each model and therefore the maximum sample size is a useful measure to compare methods.  The sample size increases as \(\sigma^2\) increases. This reflects that noisy longitudinal data is associated with high variance or small information levels. Sample sizes are particularly high in each case where \(\sigma^2=100,\) which has been chosen as an extreme value. Further, sample sizes appear to increase slightly with \(\gamma\) and decrease slightly with \(\phi^2_1\). 

We now consider the analysis when the longitudinal data is ignored. We believe the joint model to be true and correct, however we shall fit the data to a Cox model. To do so, we shall simulate data from the joint model and then fit this data to a misspecified Cox proportional hazards model. The Cox model is given by:
\begin{equation}
\lambda_i(t)=\tilde{\lambda}_0(t)\exp\{\eta_CZ_i\}.
\end{equation}
For this clinical trial, we test the hypothesis
\begin{equation}
\label{eq:H_Cox}
H_0^{(C)}:\eta_C\geq 0, \hspace{1cm} H_A^{(C)}:\eta_C < 0
\end{equation}
and we find a treatment effect estimate \(\hat{\eta}_C\) using the maximum partial likelihood method as in \citet{jennison1997group}.

Although this model is misspecified, type 1 error is not affected. This is because, under \(H_0^{(J)}\) we have \(\eta_J=0\) and there is no difference between treatment groups in overall survival. When fitting the Cox model to the data, the longitudinal data trajectory is reflected in the function \(\tilde{\lambda}_0(t)\) so that  we also have that \(\eta_C=0.\) Hence, \(H_0^{(C)}\) is also true.

Let \(n_C\) be the sample size such that we achieve type 1 error \(\alpha=0.025\) when \(\eta_J=0\) and power \(1-\beta=0.9\) when \(\eta_J=-0.5\) when we perform the hypothesis test in~\eqref{eq:H_Cox}. Values of \(n_C\) and \(n_J\) are given in Table~\ref{tbl:n_r_values}. The values of \(\gamma\) and \(\phi_1^2\) for simulation are varied. Notice that \(n_C\) does not change with \(\sigma^2\) since this plays no role in simulating survival times, and the longitudinal data, which is affected by \(\sigma^2\), is ignored. As the value of \(\gamma\) increases, the sample size \(n_C\) increases. This represents that as the longitudinal data has more weight in the survival hazard rate, ignoring the longitudinal data results in an increasingly inefficient clinical trial. When \(\gamma=0,\) this represents the case where longitudinal data is available yet has no influence on the survival function. In this case, \(n_C < n_J\) and it is more efficient to fit the data to the simple Cox model. We see that the value of \(n_C\) increasess with \(\phi_1^2\). This is the variance amongst patients of the slopes of the longitudinal trajectories. This indicates that the simple Cox model is unable to account for large differences between individual patients.

To compare the sample sizes obtained using the joint model and the misspecified Cox model, we define ``relative efficiency" to be
\(RE=\frac{n_C}{n_J}.\)
Using this definition, when \(RE>1\) we interpret this as the joint model analysis being the more efficient model to use and similarly when \(RE<1,\) the Cox model analysis is the more efficient analysis method.

Table~\ref{tbl:n_r_values} shows the relative efficiency results. We see that RE increases with \(\gamma\), increases with \(\phi_1^2\) and remains constant with \(\sigma^2\) apart from the case where \(\sigma^2=100\) which reflects extremely noisy data. Also, we see that \(RE=0.97\) when \(\gamma=0\) and \(\sigma^2=100\) which indicates that when the longitudinal data is not correlated with the survival endpoint and the longitudinal data is noisy, the simple Cox model is a slightly more efficient method for estimating the treatment effect. Apart from the case where \(\gamma=0\), it is always more efficient to analyse the data using the joint modelling approach. Even when \(\gamma=0,\) fitting the data to the simple Cox model for survival data is only marginally more efficient than fitting the data to the joint model. In the extreme case, 2.63 times as many patients are required to analyse the data using the Cox model as when the joint modelling framework is used.
\begin{table}
\centering
\begin{threeparttable}
  \begin{tabular}{llcccccccccccc}
  \(\bs{\phi}\)& \(\bs{\sigma^2}\) &\multicolumn{3}{c}{ $\bs{\gamma}\mathbf{=0}$}   &\multicolumn{3}{c}{ $\bs{\gamma}\mathbf{=0.03}$} 
  &\multicolumn{3}{c}{ $\bs{\gamma}\mathbf{=0.06}$} &\multicolumn{3}{c}{ $\bs{\gamma}\mathbf{=0.09}$} \\
  && $n_C$  & $n_J$   & RE & $n_C$  & $n_J$   & RE& $n_C$  & $n_J$ 
  & RE& $n_C$  & $n_J$   & RE \\[5pt]$\mathbf{2.5}$&$\mathbf{0}$&363&363&1.00&421&365&1.16&528&364&1.45&607&365&1.67\\$\mathbf{2.5}$&$\mathbf{1}$&363&364&1.00&421&365&1.16&528&365&1.45&607&369&1.65\\$\mathbf{2.5}$&$\mathbf{10}$&363&364&1.00&421&364&1.16&528&365&1.45&607&375&1.62\\$\mathbf{2.5}$&$\mathbf{100}$&363&373&0.97&421&374&1.13&528&420&1.26&607&522&1.16\\\rule{0pt}{3.5ex}$\mathbf{0}$&$\mathbf{1}$&351&362&0.97&371&363&1.02&401&367&1.09&439&357&1.23\\$\mathbf{2.5}$&$\mathbf{1}$&363&364&1.00&421&365&1.16&528&365&1.45&607&369&1.65\\$\mathbf{5}$&$\mathbf{1}$&366&365&1.00&462&378&1.22&720&391&1.84&990&421&2.35\\$\mathbf{7.5}$&$\mathbf{1}$&380&372&1.02&501&404&1.24&804&416&1.93&1185&450&2.63
  \end{tabular}
  \begin{tablenotes}
  \item \(RE=n_C/n_J\), \(\gamma\) is the correlation parameter between the longitudinal and survival endpoints,\(\sigma^2\) is the measurement error of the longitudinal data and \(\phi_2\) is the variance of the random effects \(b_{1i},\dots, b_{1n}\).
  \end{tablenotes}
\end{threeparttable}
\caption{Maximum sample sizes required for power 0.9 when fitting the joint model and the misspecified Cox model to the data.}
\label{tbl:n_r_values}
 \end{table}

\section{Conclusions}
The conditional score method is used to find a treatment effect estimate in the joint model of longitudinal and time to event data and we have displayed new theoretical results for the distribution of the sequence of treatment effect estimates \(\hat{\eta}_1,\dots,\hat{\eta}_K\) found using the conditional score method in a group sequential trial. Although the canonical joint distribution for the sequence \(\hat{\eta}_1,\dots,\hat{\eta}_K\) does not hold, we show that it is sensible and practical to proceed assuming that the canonical joint distribution holds anyway. In particular, we have proven that by assuming the canonical joint distribution holds, and using a non-binding futility boundary, the trial is conservative  with respect to type 1 error rates. Finally, using simulation studies we have seen that the deviations from planned type 1 error \(\alpha\) are minimal. Other benefits of using the conditional score method are that no distributional assumptions are required for the random effects of the longitudinal data and the analysis is semi-parametric so that it is not necessary to estimate the baseline hazard function.

We have shown that by including the longitudinal data, compared to the case where the longitudinal data is observed but left out of the analysis, we can greatly improve the efficiency of the trial with respect to sample size. In some cases, 2.63 times as many patients are required to achieve the same power in the analysis where the longitudinal data is left out.
\appendix
\label{app:conditions}
\section*{Appendix 1}
\subsection*{Conditions for the asymptotic theory of parameter estimates}

To ensure the existence of the asymptotic covariance matrix \(\Sigma\), we require the probabilistic limits of \(S_c^{(0)}(k), \dots,S_c^{(2)}(k)\) and \(C^{(1)}(k),\dots,C^{(3)}(k)\) given by~\eqref{eq:SC} to exist. The limits are defined through the following conditions.
\begin{conditions}
\label{conditions:conditions_joint_group}\
\begin{enumerate}
\item There exist neighbourhoods \(\Gamma\) of \(\gamma_0\) and \(N\) of \(\eta_0\) and for each \(k=1,\dots,K\) there are functions \(s_c^{(0)}(k,t, \gamma, \eta,\sigma^2)\), \(s_c^{(1)}(k,t, \gamma, \eta,\sigma^2)\), \(s_c^{(2)}(k,t, \gamma, \eta,\sigma^2)\), \(c^{(1)}(k,t, \gamma, \eta,\sigma^2)\) and \(c^{(2)}(k,t, \gamma, \eta,\sigma^2)\) defined on \([0,\infty)\times\Gamma\times N\) such that
\begin{align*}
\sup_{t\in[0,\infty),\gamma\in\Gamma,\eta\in N}\left\| S_c^{(j)}(k,t, \gamma, \eta,\sigma^2)-s_c^{(j)}(k,t, \gamma, \eta,\sigma^2)\right\| &\xrightarrow{p} 0 \text{ for } j=0,1,2 \\
\sup_{t\in[0,\infty),\gamma\in\Gamma,\eta\in N}\left\| C^{(j)}(k,t, \gamma, \eta,\sigma^2)-c^{(j)}(k,t, \gamma, \eta,\sigma^2)\right\| &\xrightarrow{p} 0 \text{ for } j=1,2.
\end{align*}
\item Each \(s_c^{(j)}(k, t, \gamma, \eta,\sigma^2)\) and \(c^{(j)}(k, t, \gamma, \eta,\sigma^2)\) is a continuous function of \(\gamma \in \Gamma\) and \(\eta\in N\) uniformly in \(t\in [0,\infty),\) and bounded on \([0,\infty)\times\Gamma\times N\).
\item For each \(k=1,\dots,K\) \(s_c^{(0)}\) and \(c^{(0)}\) are bounded away from zero on \([0,\infty)\times\Gamma\times N\).
\end{enumerate}
\end{conditions}
It is clear that the probabilistic limits \(e_c(k,t,\gamma,\eta,\sigma^2)\) of \(E_c(k,t,\gamma,\eta,\sigma^2)\), \(v^{(1)}_c(k,t,\gamma,\eta,\sigma^2)\) of \(V^{(1)}_c(k,t,\gamma,\eta,\sigma^2)\) and \(v^{(2)}_c(k,t,\gamma,\eta,\sigma^2)\) of \(V^{(2)}_c(k,t,\gamma,\eta,\sigma^2)\) exist and can expressed in terms of \(s_c^{(j)}(k,t,\gamma,\eta,\sigma^2)\) and \(c^{(j)}(k,t,\gamma,\eta,\sigma^2)\) for \(j=0,1,2\) and these are
\[e_c = \frac{s_c^{(1)}}{ s_c^{(0)}} , \hspace{1cm}v^{(1)}_c= \frac{s_c^{(2)}}{ s_c^{(0)}}-\frac{s_c^{(1)}s_c^{(1)^T}}{[s_c^{(0)}]^2},\hspace{1cm}v^{(2)}_c = \frac{c^{(2)}}{ s_c^{(0)}}-\frac{s_c^{(1)}c^{(1)^T}}{[ s_c^{(0)}]^2}. \]

\section*{Software}
All statistical computing and analyses were performed using the software environment R version 4.0.2.
Programming code for sample size calculations, is
available at \texttt{https://github.com/abigailburdon/Conditional-score-GST}.

\section*{Acknowledgements}
This research was funded by the Engineering and Physical Sciences Research Council.

\bibliographystyle{plainnat}  
\bibliography{conditional_score}

\end{document}


\maketitle

\section{Differentiation of the conditional score}
We shall differentiate the conditional score function at analysis \(k\), \(U_c(k,\gamma,\eta,\sigma^2),\) which is given in Equation~(9) of the main paper, with respect to the vector \((\gamma,\eta)^T\). First we consider differentiating the function \(E_{0i}(t,\gamma,\eta,\sigma^2)\) with respect to \(\gamma\) and \(\eta\) separately. These are
\begin{align*}
\frac{\partial}{\partial\gamma} E_{0i}(t,\gamma,\eta,\sigma^2) &= \frac{\partial}{\partial\gamma} \exp\{\gamma S_i(t,\gamma,\sigma^2)-\gamma^2\sigma^2\theta_i(t)/2+\eta Z_i\} \\
&= \frac{\partial}{\partial\gamma} \exp\{\gamma^2\sigma^2\theta_i(t)dN_i(t)+\gamma\hat{X}_i(t)-\gamma^2\sigma^2\theta_i(t)/2+\eta Z_i\} \\
&=(2\gamma\sigma^2\theta_i(t)dN_i(t)+\hat{X}_i(t)-\gamma\sigma^2\theta_i(t)) E_{0i}(t,\gamma,\eta,\sigma^2) \\
&=(2S_i(t,\gamma,\sigma^2)-\hat{X}_i(t)-\gamma\sigma^2\theta_i(t)) E_{0i}(t,\gamma,\eta,\sigma^2) \\
\frac{\partial}{\partial\eta} E_{0i}(t,\gamma,\eta,\sigma^2) &= \frac{\partial}{\partial\eta} \exp\{\gamma S_i(t,\gamma,\sigma^2)-\gamma^2\sigma^2\theta_i(t)/2+\eta Z_i\} \\
& = Z_iE_{0i}(t,\gamma,\eta,\sigma^2).
\end{align*}
Derivatives of \(E_{0i}(t,\gamma,\eta,\sigma^2)\) are needed for differentiating the function \(S^{(0)}_c(t,\gamma,\eta,\sigma^2)\). Similarly for differentiation of the function \(S^{(1)}_c(t,\gamma,\eta,\sigma^2)\) we now calculate derivatives for \(S_i(t,\gamma,\eta,\sigma^2)E_{0i}(t,\gamma,\eta,\sigma^2)\) and \(Z_iE_{0i}(t,\gamma,\eta,\sigma^2)\). These are
\begin{align*}
\frac{\partial}{\partial\gamma} S_i(t,\gamma,\eta,\sigma^2)E_{0i}(t,\gamma,\eta,\sigma^2) &= \bigg[S_i(t,\gamma,\sigma^2)^2 + S_i(t,\gamma,\eta,\sigma^2) (S_i(t,\gamma,\sigma^2)-\hat{X}_i(t) \\
&-\gamma\sigma^2\theta_i(t)) +\sigma^2\theta_i(t)dN_i(t)\bigg]E_{0i}(t,\gamma,\eta,\sigma^2) \\
\frac{\partial}{\partial\gamma}Z_iE_{0i}(t,\gamma,\eta,\sigma^2) &= Z_i(2S_i(t,\gamma,\sigma^2)-\hat{X}_i(t)-\gamma\sigma^2\theta_i(t)) E_{0i}(t,\gamma,\eta,\sigma^2)\\
\frac{\partial}{\partial\eta} S_i(t,\gamma,\eta,\sigma^2)E_{0i}(t,\gamma,\eta,\sigma^2) &=  S_i(t,\gamma,\eta,\sigma^2)Z_iE_{0i}(t,\gamma,\eta,\sigma^2) \\
\frac{\partial}{\partial\eta}Z_iE_{0i}(t,\gamma,\eta,\sigma^2) &= Z_i^TZ_iE_{0i}(t,\gamma,\eta,\sigma^2).
\end{align*}

The function \(S^{(0)}_c\) is a scalar and its derivative will be a \((p+1)\times 1\) column vector. The matrices \(S_c^{(1)}\) and \(C^{(1)}\) are both \((p+1)\times 1\) column vectors. For differentiation, entry \((\partial S^{(1)}_c/\partial (\gamma,\eta)^T)_{i,j}\) represents the \(i^{th}\) element of \(S_c^{(1)}\) differentiated with respect to the \(j^{th}\) element of the vector \((\gamma,\eta)^T\). The functions \(S^{(2)}_c,C^{(2)},C^{(3)},V_c^{(1)},V_c^{(2)}\) and \(V_c^{(3)}\) are all \((p+1)\times(p+1)\) matrices. Using the above calculation, we find the following relationships hold
\begin{align*}
\frac{\partial}{\partial(\gamma,\eta)^T}&S^{(0)}_c(k,t,\gamma,\eta,\sigma^2) = \frac{\partial}{\partial(\gamma,\eta)^T}\left[\frac{1}{n}\sum_{i=1}^nE_{0i}(t,\gamma,\eta,\sigma^2)Y_i(k,t)\right] \\ 
 = &\frac{1}{n}\sum_{i=1}^n\left\{ \begin{array}{c} 2S_i(t,\gamma,\sigma^2)-\hat{X}_i(t)-\gamma\sigma^2\theta_i(t) \\ Z_i \end{array} \right\} E_{0i}(t,\gamma,\eta,\sigma^2)Y_i(k,t) \\
 = &\frac{1}{n}\sum_{i=1}^n  \left\{ \begin{array}{c} S_i(t,\gamma,\sigma^2) \\ Z_i \end{array} \right\}  E_{0i}(t,\gamma,\eta,\sigma^2)Y_i(k,t)+ \frac{1}{n}\sum_{i=1}^n  J_i(k,t, \gamma,\sigma^2) E_{0i}(t,\gamma,\eta,\sigma^2)Y_i(k,t) \\
= & S^{(1)}_c(k,t,\gamma,\eta,\sigma^2) + C^{(1)}(k,t,\gamma,\eta,\sigma^2)
\end{align*}
A similar calculation is performed for the differentiation of \(S^{(1)}_c(t,\gamma,\eta,\sigma^2)\). Temporarily removing dependence of all functions on parameters \(k,t,\gamma,\eta\) and \(\sigma^2\), we have
\begin{align*}
\frac{\partial}{\partial(\gamma,\eta)^T}&S^{(1)}_c(k,t,\gamma,\eta,\sigma^2) = \frac{\partial}{\partial(\gamma,\eta)^T}\left[\frac{1}{n}\sum_{i=1}^n  \left\{ \begin{array}{c} S_i(t,\gamma,\sigma^2) \\ Z_i \end{array} \right\}  E_{0i}(t,\gamma,\eta,\sigma^2)Y_i(k,t)\right] \\ 
 = &\frac{1}{n}\sum_{i=1}^n\left\{ \begin{array}{cc} S_i^2 + S_i(S_i-\hat{X}_i -\gamma\sigma^2\theta_i) +\sigma^2\theta_idN_i  & S_iZ_i \\ Z_i(2S_i-\hat{X}_i-\gamma\sigma^2\theta_i) & Z_i^TZ_i\end{array} \right\} E_{0i}Y_i \\
 =& \frac{1}{n}\sum_{i=1}^n\left\{ \begin{array}{c} S_i \\ Z_i \end{array}\right\} \left\{ \begin{array}{c} S_i \\ Z_i \end{array}\right\}^T  E_{0i}Y_i \\
+& \frac{1}{n}\sum_{i=1}^n\left\{ \begin{array}{c} S_i \\ Z_i \end{array}\right\} \left\{ \begin{array}{c} S_i-\hat{X}_i-\gamma\sigma^2\theta_i \\ 0 \end{array}\right\}^T  E_{0i}Y_i +\frac{1}{n}\sum_{i=1}^n
\begin{bmatrix}
\sigma^2\theta_i(u)dN_i& 0 &\cdots & 0 \\
0 & 0 & \cdots& 0 \\
\vdots & \vdots &\ddots& \vdots \\
0 & 0 & \cdots& 0\end{bmatrix}
E_{0i}Y_i \\
= & S^{(2)}_c(k,t,\gamma,\eta,\sigma^2) + C^{(2)}(k,t,\gamma,\eta,\sigma^2) +  C^{(3)}(k,t,\gamma,\eta.\sigma^2) .
\end{align*}
We are now able to differentiate the function \(E_c(k,t,\gamma,\eta,\sigma^2)\). By the quotient rule and the relationships derived above, we have
\begin{align*}
\frac{\partial}{\partial(\gamma,\eta)^T}&E_c(k,t, \gamma,\eta, \sigma^2) = \frac{\partial}{\partial(\gamma,\eta)^T}\left[\frac{S_c^{(1)}(k,t, \gamma, \eta,\sigma^2)}{S_c^{(0)}(k,t, \gamma, \eta,\sigma^2)}\right] \\
 &= \frac{\frac{\partial}{\partial(\gamma,\eta)^T}S_c^{(1)}(k,t, \gamma, \eta,\sigma^2)}{S_c^{(0)}(k,t, \gamma, \eta,\sigma^2)}-\frac{S_c^{(1)}(k,t, \gamma, \eta,\sigma^2)\frac{\partial}{\partial(\gamma,\eta)^T}S_c^{(0)}(k,t, \gamma, \eta,\sigma^2)}{\left[S_c^{(0)}(k,t, \gamma, \eta,\sigma^2)\right]^2} \\
&= \frac{S^{(2)}_c(k,t,\gamma,\eta,\sigma^2)}{S_c^{(0)}(k,t, \gamma, \eta,\sigma^2)} + \frac{C^{(2)}(k,t,\gamma,\eta,\sigma^2)}{S_c^{(0)}(k,t, \gamma, \eta,\sigma^2)} + \frac{C^{(3)}(k,t,\gamma,\eta,\sigma^2)}{S_c^{(0)}(k,t, \gamma, \eta,\sigma^2)} \\
& - \frac{S_c^{(1)}(k,t, \gamma, \eta,\sigma^2)S_c^{(1)}(k,t, \gamma, \eta,\sigma^2)^T}{\left[S_c^{(0)}(k,t, \gamma, \eta,\sigma^2)\right]^2}- \frac{S_c^{(1)}(k,t, \gamma, \eta,\sigma^2)C^{(1)}(k,t, \gamma, \eta,\sigma^2)^T}{\left[S_c^{(0)}(k,t, \gamma, \eta,\sigma^2)\right]^2} \\
&= V_c^{(1)}(k,t, \gamma, \eta,\sigma^2) +  V_c^{(2)}(k,t, \gamma, \eta,\sigma^2) + V_c^{(3)}(k,t, \gamma, \eta,\sigma^2).
\end{align*}

Combining these results, the first derivative of the conditional score function with respect to the vector \((\gamma,\eta)^T\) is given by
\begin{align}
\notag
\frac{\partial}{\partial(\gamma,\eta)^T}U_c(k,\gamma,\eta,\sigma^2) &=\frac{\partial}{\partial(\gamma,\eta)^T}\left[ \int_0^\infty\sum_{i=1}^{n}\left( \{S_i(u, \gamma, \sigma^2), Z_i^T\}^T- E_c(k,u, \gamma,\eta, \sigma^2)\right) dN_i(k,u)\right]\\
\label{eq:d_cond_score}
 = \sum_{i=1}^{n} \int_0^\infty \bigg[\Gamma_i(k,u,&\sigma^2) -V_c^{(1)}(k,u,\gamma,\eta,\sigma^2) +  V_c^{(2)}(k,u,\gamma,\eta,\sigma^2)
+ V_c^{(3)}(k,u,\gamma,\eta,\sigma^2)\bigg]dN_i(k,u).
\end{align}
\section{Details for the proof of Theorem 1}
\begin{lemma}
\label{lemma:taylor}
Starting from
\begin{equation}
\label{eq:est_eqs_all}
\begin{bmatrix}U_c(1, \gamma,\eta,\sigma^2) \\ \vdots \\ U_c(K,\gamma,\eta,\sigma^2)\end{bmatrix} = \begin{bmatrix} \mathbf{0} \\ \vdots \\ \mathbf{0} \end{bmatrix}
\end{equation}
we have that
\[-n^{-\frac{1}{2}}\bar{\mathbf{A}}^{-1}\begin{bmatrix}U_c(1, \gamma,\eta,\sigma^2) \\ \vdots \\ U_c(K, \gamma,\eta,\sigma^2) \end{bmatrix}  = \bar{\mathbf{A}}^{-1}\begin{bmatrix} n^{-1} \frac{\partial}{\partial(\gamma,\eta)^T} U_c(1,\gamma,\eta, \sigma^2)|_{(\gamma^{*(1)}\eta^{*(1)})} \\ \vdots \\n^{-1} \frac{\partial}{\partial(\gamma,\eta)^T} U_c(K,\gamma,\eta, \sigma^2)|_{(\gamma^{*(K)}\eta^{*(K)})}\end{bmatrix}  \cdot n^{\frac{1}{2}}\begin{bmatrix}
\hat{\gamma}_n^{(1)} -\gamma_0 \\
\hat{\eta}_n^{(1)} -\eta_0 \\
\vdots \\
\hat{\gamma}_n^{(K)} -\gamma_0 \\
\hat{\eta}_n^{(K)} -\eta_0 \\
\end{bmatrix}\]
where \((\gamma^{*(k)}\eta^{*(k)})\) lies on the line segment between \((\gamma_0,\eta_0)\) and \((\hat{\gamma}^{(k)}_n,\hat{\eta}^{(k)}_n)\). 
\end{lemma}

\begin{proof}
Fox a fixed, \(k=1,\dots,K\), we begin by applying a Taylor expansion to each element of \(U_c(k,\gamma,\eta,\sigma^2).\) Let \(U_c^j(k,\gamma,\eta,\sigma^2)\) be the \(j^{th}\) element of the \(p\times 1\) column vector \(U_c(k,\gamma,\eta,\sigma^2).\) Let the vector \((\gamma,\eta)^T\) be of dimension \(p\times 1\), then \(\partial/\partial(\gamma,\eta)^T U_c^j(k,\gamma,\eta,\sigma^2)\) is a \(1\times p\) row vector and \((\gamma,\eta)^T-(\gamma_0,\eta_0)^T\) is a \(p\times 1\) column vector. Regularity conditions allow us to perform the following Taylor expansion. For each \(j=1,\dots, p\) the Taylor expansion of \( U_c^j(k,\gamma,\eta,\sigma^2)\) around \((\gamma_0,\eta_0)^T\) is
\[
U_c^j(k,\tilde{\gamma},\tilde{\eta},\sigma^2)=U^j_c(k,\gamma_0,\eta_0,\sigma^2)+ \frac{\partial}{\partial\bs(\gamma,\eta)^T}U_c^j(k,\gamma,\eta,\sigma^2)\bigg\rvert_{(\gamma,\eta)^T=(\gamma^{*(k)}_j,\eta^{*(k)}_j)^T}\cdot((\tilde{\gamma},\tilde{\eta})^T-(\gamma_0,\eta_0)^T)
\]
where \((\gamma^{*(k)}_j,\eta^{*(k)}_j)^T\) lies on the line segment between \((\gamma_0,\eta_0)^T\) and \((\tilde{\gamma},\tilde{\eta})^T\). Each row of \(U_c(k,\gamma_0,\eta_0,\sigma^2)\) represents a dimension, so different rows are differing functions of \((\gamma,\eta)^T\) and hence each \((\gamma^{*(k)}_j,\eta^{*(k)}_j)^T\) is specific to the row \(j.\) 

Given the definition of the estimates \(\hat{\gamma}^{(k)}_n\) and \(\hat{\eta}^{(k)}_n\), we recognise that \(U^j_c(k,\hat{\gamma}^{(k)}_n,\hat{\eta}^{(k)}_n,\sigma^2) = 0.\) Therefore by substituting \((\hat{\gamma}^{(k)}_n, \hat{\eta}^{(k)}_n)^T\) for \((\gamma,\eta)^T\), and stacking the rows, we have
\begin{align*}
\mathbf{0}= \begin{bmatrix}U^1_c(k,\hat{\gamma}^{(k)}_n,\hat{\eta}^{(k)}_n,\sigma^2) \\ \vdots \\U^p_c(k,\hat{\gamma}^{(k)}_n,\hat{\eta}^{(k)}_n,\sigma^2) \end{bmatrix} =  U_c(k,\gamma_0,\eta_0,\sigma^2)+ 
&\begin{bmatrix} \frac{\partial}{\partial(\gamma,\eta)^T}U^1_c(k,\gamma,\eta,\sigma^2)\big\rvert_{(\gamma,\eta)^T=(\gamma^{*(k)}_1,\eta^{*(k)}_1)^T} \\ \vdots \\ \frac{\partial}{\partial(\gamma,\eta)^T}U^p_c(k,\gamma,\eta,\sigma^2)\big\rvert_{(\gamma,\eta)^T=(\gamma^{*(k)}_p,\eta^{*(k)}_p)^T}
\end{bmatrix}\\
&\cdot ((\hat{\gamma}^{(k)}_n,\hat{\eta}^{(k)}_n)^T-(\gamma_0,\eta_0)^T)
\end{align*}
where each \((\gamma^{*(k)}_j,\eta^{*(k)}_j)^T\) lies on the line segment between \((\gamma_0,\eta_0)^T\) and \((\hat{\gamma}^{(k)}_n,\hat{\eta}^{(k)}_n)^T.\)

The matrix \(\mathbf{A}^{(k)}\) by definition is finite valued, symmetric and positive definite. Therefore \(\mathbf{A}^{(k)}\) is invertible with inverse \((\mathbf{A}^{(k)})^{-1}\) which is also finite valued, symmetric and positive definite. Multiplying the above equation by \((\mathbf{A}^{(k)})^{-1}\) and a simple rearrangement gives
\begin{align}
\notag
-n^{-\frac{1}{2}}(\mathbf{A}^{(k)})^{-1}U_c(k,\gamma_0,\eta_0,\sigma^2) = &(\mathbf{A}^{(k)})^{-1}
\begin{bmatrix}n^{-1} \frac{\partial}{\partial(\gamma,\eta)^T}U^1_c(k,\gamma,\eta,\sigma^2)\big\rvert_{(\gamma,\eta)^T=(\gamma^{*(k)}_1,\eta^{*(k)}_1)^T} \\ \vdots \\ n^{-1}\frac{\partial}{\partial(\gamma,\eta)^T}U^p_c(k,\gamma,\eta,\sigma^2)\big\rvert_{(\gamma,\eta)^T=(\gamma^{*(k)}_p,\eta^{*(k)}_p)^T}
\end{bmatrix}\\
\label{eq:taylor_final}
&\cdot n^{\frac{1}{2}}((\hat{\gamma}^{(k)}_n,\hat{\eta}^{(k)}_n)^T-(\gamma_0,\eta_0)^T).
\end{align}

Returning to the definition in the main paper, \(\bar{\mathbf{A}}\) is the block diagonal matrix whose \(k^{th}\) diagonal matrix is the \(p\times p\) matrix \(\mathbf{A}^{(k)}.\) Then, aggregating the above equation for \(k=1,\dots,K,\) we have the result
\begin{equation}
\label{eq:taylor_group}
-n^{-\frac{1}{2}}\bar{\mathbf{A}}^{-1}\begin{bmatrix}U_c(1, \gamma,\eta,\sigma^2) \\ \vdots \\ U_c(K, \gamma,\eta,\sigma^2) \end{bmatrix}  = \bar{\mathbf{A}}^{-1}\begin{bmatrix} n^{-1} \frac{\partial}{\partial(\gamma,\eta)^T} U_c(1,\gamma,\eta, \sigma^2)|_{(\gamma^{*(1)}\eta^{*(1)})} \\ \vdots \\n^{-1} \frac{\partial}{\partial(\gamma,\eta)^T} U_c(K,\gamma,\eta, \sigma^2)|_{(\gamma^{*(K)}\eta^{*(K)})}\end{bmatrix}  \cdot n^{\frac{1}{2}}\begin{bmatrix}
\hat{\gamma}_n^{(1)} -\gamma_0 \\
\hat{\eta}_n^{(1)} -\eta_0 \\
\vdots \\
\hat{\gamma}_n^{(K)} -\gamma_0 \\
\hat{\eta}_n^{(K)} -\eta_0 \\
\end{bmatrix}
\end{equation}
\end{proof}

\begin{lemma}
\label{lemma:A}
\[n^{-1}\frac{\partial}{\partial(\gamma,\eta^T)^T}U_c^{(n)}(k,\gamma,\eta,\sigma_0^2)\bigg\rvert_{\gamma=\gamma^*_n, \eta = \eta^*_n} \xrightarrow{p}\mathbf{A}^{(k)}.\]
\end{lemma}
\begin{proof}
We can write
\[\frac{1}{n}\frac{\partial}{\partial(k,\gamma,\eta^T)^T}U_c^{(n)}(\gamma, \eta, \sigma^2_0)\rvert_{\gamma=\gamma^*, \eta=\eta^*}- \mathbf{A}^{(k)}=-\mathbf{D}_0^{(k)}+\mathbf{D}_1^{(k)}+\mathbf{D}_2^{(k)}+\mathbf{D}_3^{(k)}\]
where
\begin{align*}
\mathbf{D}_0^{(k)}=\frac{1}{n}\sum_{i=1}^{n} &\int_0^\infty\Gamma_i(k,u, \sigma^2_0)dN_i(k,u) -\int_0^\infty \mathbb{E}(\Gamma_i(k,u,\sigma^2_0))s^{(0)}_c(k,u,\gamma_0,\eta_0,\sigma_0^2)h_0(u)du\\
\mathbf{D}_1^{(k)}=\frac{1}{n}\sum_{i=1}^{n} &\int_0^\infty V_c^{(1)}(k,u,\gamma^*,\eta^*,\sigma_0^2)dN_i(k,u) - \int_0^\infty v_c^{(1)}(k,u,\gamma_0,\eta_0,\sigma_0^2)s^{(0)}_c(k,u, \gamma_0,\eta_0,\sigma^2_0)h_0(u)du \\
\mathbf{D}_2^{(k)}= \frac{1}{n}\sum_{i=1}^{n} &\int_0^\infty V_c^{(2)}(k,u,\gamma^*,\eta^*,\sigma_0^2)dN_i(k,u) - \int_0^\infty v_c^{(2)}(k,u,\gamma_0,\eta_0,\sigma_0^2)s^{(0)}_c(k,u, \gamma_0,\eta_0,\sigma^2_0)h_0(u)du \\
\mathbf{D}_3^{(k)}=\frac{1}{n}\sum_{i=1}^{n} &\int_0^\infty V_c^{(3)}(k,u,\gamma^*,\eta^*,\sigma_0^2)dN_i(k,u),
\end{align*}
and it remains to show that each of the terms \(\mathbf{D}_0^{(k)},\mathbf{D}_1^{(k)},\mathbf{D}_2^{(k)}\) and \(\mathbf{D}_3^{(k)}\) converge in probability to zero. For the terms \(\mathbf{D}_0^{(k)}, \mathbf{D}_1^{(k)}\) and \(\mathbf{D}_2^{(k)}\), analogous results are presented by~\citet{andersen1982cox} for survival data. For the term \(\mathbf{D}_3^{(k)}\), we make use of the relationship \(dN_i(k,u)dN_j(k,u)=0\) for \(i\neq j\) and \(dN_i(k,u)^2=1\) since \(dN_i(k,u)\) is an indicator function. The function \(V^{(3)}\) is a \((p+1)\times(p+1)\) matrix where all entries are zero apart from the top left. Therefore we shall only consider entry \((1,1)\). We have that
\begin{align}
\notag
\left[\mathbf{D}_3^{(k)}\right]_{11}=&\frac{1}{n}\sum_{i=1}^{n} \int_0^\infty \left[V^{(3)}(k,u,\gamma,\eta,\sigma^2)\right]_{11} dN_i(k,u) \\[0.1in]
\notag
= &\frac{1}{n}\sum_{i=1}^{n} \int_0^\infty \frac{\left[C^{(3)}(k,u,\gamma,\eta,\sigma^2)\right]_{11}}{\sum_{j=1}^{n}E_{0j}(k,u,\gamma,\eta,\sigma^2)Y_j(k,u)} dN_i(k,u) \\[0.1in]
\notag
 = &\frac{1}{n}\sum_{i=1}^{n} \int_0^\infty \frac{\sum_{j=1}^{n}\sigma^2\theta_j(u)dN_j(k,u)E_{0j}(k,u,\gamma,\eta,\sigma^2)Y_j(k,u)}{\sum_{j=1}^{n}E_{0j}(k,u,\gamma,\eta,\sigma^2)Y_j(k,u)} dN_i(k,u) \\[0.1in]
\label{eq:D3}
 = &\frac{1}{n}\sum_{i=1}^{n} \int_0^\infty \frac{\sigma^2\theta_i(u)E_{0i}(k,u,\gamma,\eta,\sigma^2)Y_i(k,u)}{\sum_{j=1}^{n}E_{0j}(k,u,\gamma,\eta,\sigma^2)Y_j(k,u)} dN_i(k,u). 
\end{align}
A single element in the summand in Expression~\eqref{eq:D3} can be written
\[\int_0^\infty \frac{\frac{1}{n}\sigma^2\theta_i(u)E_{0i}(k,u,\gamma,\eta,\sigma^2)Y_i(k,u)dN_i(k,u)}{n \frac{1}{n} \sum_{j=1}^{n}E_{0j}(k,u,\gamma,\eta,\sigma^2)Y_j(k,u)}\]
and it is clear that \([\mathbf{D}_3^{(k)}]_{11}\xrightarrow{p}0\) as \(n\rightarrow \infty.\) Therefore, we have the result
\[\frac{1}{n}\frac{\partial}{\partial(\gamma,\eta^T)^T}U_c^{(n)}(k,\gamma, \eta, \sigma^2_0)\rvert_{\gamma=\gamma^*, \eta=\eta^*}\xrightarrow{p} \mathbf{A}^{(k)}.\]
\end{proof}

\begin{lemma}
\label{lemma:cov}
\begin{align*}
&Cov\left(W^{(n)}_{j_1}(k_1,\tau_{k_1}, \gamma_0,\eta_0,\sigma_0^2),W^{(n)}_{j_2}(k_2,\tau_{k_2}, \gamma_0,\eta_0,\sigma_0^2)\right) \\
=&\mathbb{E}\left(\sum_{i=1}^n\int_0^{\tau_{k_1}}H_{ij_1}^{(n)}(k_1,u, \gamma_0,\eta_0,\sigma_0^2)H_{ij_2}^{(n)}(k_1,u, \gamma_0,\eta_0,\sigma_0^2)h_0(u)E_{0i}(u,\gamma_0,\eta_0,\sigma_0^2)Y_i(k_1,u)du\right).
\end{align*}
\end{lemma}
\begin{proof}
In the following, we shall drop the dependency of the function \(H^{(n)}\) on the parameters \(\gamma,\eta\) and \(\sigma^2\). In the following calculation, the second equality holds because patients are independent and the orders of summation and integration can be interchanged. Further, the event for each patient can only happen in one analysis so that if \(l_1\neq l_2\) then \(dDN_i(l_1,u)dDN_i(l_2,u)=0\) and because \(dDN_i(l,u)\) is an indicator function, we have that \(dDNi(l,u)^2 = dDN_i(l,u).\) Thus for \(k_1 \leq k_2\), the covariance is given by
\begin{align*}
Cov\bigg(&W^{(n)}_{j_1}(k_1,\tau_{k_1}, \gamma_0,\eta_0,\sigma_0^2),W^{(n)}_{j_2}(k_2,\tau_{k_2}, \gamma_0,\eta_0,\sigma_0^2)\bigg)\\
=&\mathbb{E}\bigg(\int_0^{\tau_{k_1}}\sum_{i=1}^n\sum_{l_1=1}^{k_1}H_{ij_1}^{(n)}(l_1,u)dDN_i(l_1,u)\int_0^{\tau_{k_2}}\sum_{i=1}^n\sum_{l_2=1}^{k_2}H_{ij_2}^{(n)}(l_2,u)dDN_i(l_2,u)\bigg) \\
=&\mathbb{E}\bigg(\sum_{i=1}^n\bigg[\sum_{l_1=1}^{k_1}\int_0^{\tau_{k_1}}H_{ij_1}^{(n)}(l_1,u)dDN_i(l_1,u)\sum_{l_2=1}^{k_2}\int_0^{\tau_{k_2}}H_{ij_2}^{(n)}(l_2,u)dDN_i(l_2,u)\bigg]\bigg) \\
=&\mathbb{E}\bigg(\sum_{i=1}^n\sum_{l=1}^{k_1}\int_0^{\tau_1}H_{ij_1}^{(n)}(l,u)H_{ij_2}^{(n)}(l,u)dDN_i(l,u)\bigg).
\end{align*}

For the following, note that we have
\[\mathbb{E}(dN_i(k,u)|S_i(t,\gamma,\sigma^2),t_i(t),Z_i,Y_i(k,u)=h_0(u)E_{0i}(u,\gamma,\eta,\sigma^2)Y_i(k,u)du\] 
which follows by definition of the conditional intensity process, \(\lambda_i^C(k,t)\). The following calculations use the law of total expectation and the expectation of the counting process \(dN_i(k,u)\) in a similar way to the fixed sample case.
\begin{align*}
\mathbb{E}&\bigg(\sum_{i=1}^n\sum_{l=1}^{k_1}\int_0^{\tau_1}H_{ij_1}^{(n)}(l,u)H_{ij_2}^{(n)}(l,u)dDN_i(l,u)\bigg) \\
 = \mathbb{E}&\bigg(\sum_{i=1}^n\sum_{l=1}^{k_1}\int_0^{\tau_1}\mathbb{E}[H_{ij_1}^{(n)}(l,u)H_{ij_2}^{(n)}(l,u)dDN_i(l,u)|S_i(u, \gamma_0,\sigma_0^2), Z_i, t_i(u),Y_i(l,u)]\bigg) \\
 = \mathbb{E}&\bigg(\sum_{i=1}^n\sum_{l=1}^{k_1}\int_0^{\tau_1}H_{ij_1}^{(n)}(l,u)H_{ij_2}^{(n)}(l,u)\mathbb{E}[dN_i(l,u)-dN_i(l-1, u)|S_i(u, \gamma_0,\sigma^2_0), Z_i, t_i(u),Y_i(l,u)]\bigg) \\
 = \mathbb{E}&\bigg(\sum_{i=1}^n\sum_{l=1}^{k_1}\int_0^{\tau_1}H_{ij_1}^{(n)}(l,u)H_{ij_2}^{(n)}(l,u)h_0(u)E_{0i}(u,\gamma_0,\eta_0,\sigma_0^2)(Y_i(l,u)-Y_i(l-1,u))du\bigg) \\
 = \mathbb{E}&\bigg(\sum_{i=1}^n\int_0^{\tau_{k_1}}H_{ij_1}^{(n)}(k_1,u)H_{ij_2}^{(n)}(k_1,u)h_0(u)E_{0i}(u,\gamma_0,\eta_0,\sigma_0^2)Y_i(k_1,u)du\bigg).
\end{align*}

Therefore we have the result
\begin{align*}
&Cov\left(W^{(n)}_{j_1}(k_1,\tau_{k_1}, \gamma_0,\eta_0,\sigma_0^2),W^{(n)}_{j_2}(k_2,\tau_{k_2}, \gamma_0,\eta_0,\sigma_0^2)\right) \\
=&\mathbb{E}\left(\sum_{i=1}^n\int_0^{\tau_{k_1}}H_{ij_1}^{(n)}(k_1,u, \gamma_0,\eta_0,\sigma_0^2)H_{ij_2}^{(n)}(k_1,u, \gamma_0,\eta_0,\sigma_0^2)h_0(u)E_{0i}(u,\gamma_0,\eta_0,\sigma_0^2)Y_i(k_1,u)du\right).
\end{align*}
\end{proof}

\begin{lemma}
\label{lemma:B}
\[Cov\left(W^{(n)}_{j_1}(k_1,\tau_{k_1}, \gamma_0,\eta_0,\sigma_0^2),W^{(n)}_{j_2}(k_2,\tau_{k_2}, \gamma_0,\eta_0,\sigma_0^2)\right) \xrightarrow{p}B^{(k_1)}_{j_1j_2}.\]
\end{lemma}
\begin{proof}
 The following calculations use the law of total expectation and the expectation of the counting process \(dN_i(k,u)\) in a similar way to the fixed sample case.
\begin{align*}
\mathbb{E}&\bigg(\sum_{i=1}^n\sum_{l=1}^{k_1}\int_0^{\tau_1}H_{ij_1}^{(n)}(l,u)H_{ij_2}^{(n)}(l,u)dDN_i(l,u)\bigg) \\
 = \mathbb{E}&\bigg(\sum_{i=1}^n\sum_{l=1}^{k_1}\int_0^{\tau_1}\mathbb{E}[H_{ij_1}^{(n)}(l,u)H_{ij_2}^{(n)}(l,u)dDN_i(l,u)|S_i(u, \gamma_0,\sigma_0^2), Z_i, t_i(u),Y_i(l,u)]\bigg) \\
 = \mathbb{E}&\bigg(\sum_{i=1}^n\sum_{l=1}^{k_1}\int_0^{\tau_1}H_{ij_1}^{(n)}(l,u)H_{ij_2}^{(n)}(l,u)\mathbb{E}[dN_i(l,u)-dN_i(l-1, u)|S_i(u, \gamma_0,\sigma^2_0), Z_i, t_i(u),Y_i(l,u)]\bigg) \\
 = \mathbb{E}&\bigg(\sum_{i=1}^n\sum_{l=1}^{k_1}\int_0^{\tau_1}H_{ij_1}^{(n)}(l,u)H_{ij_2}^{(n)}(l,u)h_0(u)E_{0i}(u,\gamma_0,\eta_0,\sigma_0^2)(Y_i(l,u)-Y_i(l-1,u))du\bigg) \\
 = \mathbb{E}&\bigg(\sum_{i=1}^n\int_0^{\tau_{k_1}}H_{ij_1}^{(n)}(k_1,u)H_{ij_2}^{(n)}(k_1,u)h_0(u)E_{0i}(u,\gamma_0,\eta_0,\sigma_0^2)Y_i(k_1,u)du\bigg).
\end{align*}
For the following calculation, dependency on parameters \(\gamma_0,\eta_0\) and \(\sigma_0^2\) is removed from the functions \(S^{(0)}_c(k,t), S^{(1)}_c(k,t)\) and \(S^{(2)}_c(k,t)\) for notational purposes. Further, the dependency on parameters \(\gamma_0,\eta_0\) and \(\sigma_0^2\) is also removed from the probabilistic limits \(s^{(0)}_c(k,t),s^{(1)}_c(k,t),s^{(2)}_c(k,t)\) and \(e_c(k,t).\) Then following calculation holds
\begin{align*}
&\sum_{i=1}^n H_{ij_1}^{(n)}(k,u)H_{ij_2}^{(n)}(k,u)h_0(u)E_{0i}(u,\gamma_0,\eta_0,\sigma_0^2)Y_i(k,u)du \\
=&\sum_{i=1}^n n^{-1}\big(\{S_i(u, \gamma_0,\sigma^2_0), Z_i^T\}^T_{j_1}- e_c(k,u)_{j_1}\big)\big(\{S_i(u, \gamma_0,\sigma^2_0), Z_i^T\}^T_{j_2}- e_c(k,u)_{j_2}\big) h_0(u)E_{0i}(u,\gamma_0,\eta_0,\sigma_0^2)Y_i(k,u) \\
=&\bigg([S^{(2)}_c(k,u)]_{j_1j_2}-\frac{[s^{(1)}_c(k,u)]_{j_1}}{s^{(0)}_c(u)}[S^{(1)}_c(k,u)]_{j_2}\\
-&\frac{[s^{(1)}_c(k,u)]_{j_2}}{s^{(0)}_c(k,u)}[S^{(1)}_c(k,u)]_{j_1}+\frac{[s^{(1)}_c(k,u)]_{j_1}}{s^{(0)}_c(k,u)}\frac{[s^{(1)}_c(k,u)]_{j_2}}{s^{(0)}_c(k,u)}S^{(0)}_c(k,u)\bigg)h_0(u) \\ 
\xrightarrow{p}&\bigg(\frac{[s^{(2)}_c(k,u)]_{j_1j_2}}{s^{(0)}_c(k,u)}-\frac{[s^{(1)}_c(k,u)]_{j_1}[s^{(1)}_c(k,u)]_{j_2}}{s^{(0)\otimes 2}_c(k,u)}\bigg)s^{(0)}_c(k,u)h_0(u)\\
=&[v^{(1)}_c(k,u)]_{j_1j_2}s^{(0)}_c(k,u)h_0(u).
\end{align*}

Therefore, combining the above with Lemma~\ref{lemma:cov} it can be seen that for \(1\leq k_1\leq k_2\leq K\),
\begin{align*}
Cov\bigg(&W^{(n)}_{j_1}(k_1,\tau_{k_1}, \gamma_0,\eta_0,\sigma_0^2),W^{(n)}_{j_2}(k_2,\tau_{k_2}, \gamma_0,\eta_0,\sigma_0^2)\bigg) \\
\xrightarrow{p} &\int_0^\infty [v^{(1)}_c(k_1,u, \gamma_0,\eta_0,\sigma^2_0)]_{j_1j_2}s^{(0)}_c(k_1,u, \gamma_0,\eta_0,\sigma^2_0)h_0(u)du \\
\xrightarrow{p} &\left(\int_0^\infty v^{(1)}_c(k_1,u, \gamma_0,\eta_0,\sigma^2_0)s^{(0)}_c(k_1,u, \gamma_0,\eta_0,\sigma^2_0)h_0(u)du\right)_{j_1j_2} \\
=&B^{(k_1)}_{j_1j_2}
\end{align*}
\end{proof}
\section{Sketch proof that the trial is conservative with respect to type 1 error for a group sequential test with \(K=3\)}

We now extend the results of Section~3.2 from the main paper to show that for a group sequential test with \(K=3\) analyses, if the canonical joint distribution does not hold, yet we assume it does anyway, the resulting test will have type 1 error rate less than the planned significance level \(\alpha\).

We follow a similar approach to the proof of the case \(K=2.\)   Let \(\hat{\eta}_1,\hat{\eta}_2,\hat{\eta}_3\) be the sequence of treatment effect estimates that are calculated using the conditional score method. Let the true variance-covariance matrix for this sequence of estimates be \(\Sigma\). Proceeding using method 1, we let the information levels be calculated as \(\mathcal{I}_k = (\Sigma_{kk})^{-1}\) and the \(Z-\)statistic is given by \(Z_k=\hat{\eta}_k\sqrt{\mathcal{I}_k}\) for \(k=1,2,3.\) Under \(H_0\) we have
\begin{equation}
\label{eq:Z_dist}
\begin{bmatrix} \hat{\eta}_1 \\ \hat{\eta}_2 \\\hat{\eta}_3\end{bmatrix} \sim N_K\left[ \left(\begin{array}{c} 0 \\ 0 \\ 0 \end{array}\right),\left( \begin{array}{ccc}
\Sigma_{11} & \Sigma_{12} & \Sigma_{13} \\
\Sigma_{12} & \Sigma_{22} & \Sigma_{23} \\
\Sigma_{13} & \Sigma_{23} & \Sigma_{33}
\end{array} \right) \right], \hspace{1cm}
\begin{bmatrix} Z_1 \\ Z_2 \\ Z_3 \end{bmatrix} \sim N_K\left[ \left(\begin{array}{c} 0 \\ 0 \\ 0\end{array}\right), \left( \begin{array}{ccc}
1 & \rho_{12} & \rho_{13} \\
\rho_{12} & 1 & \rho_{23} \\
\rho_{13} & \rho_{23} & 1 \\
\end{array} \right) \right]
\end{equation}
where the correlation parameters are given by
\begin{equation}
\label{eq:rho}
\rho_{k_1k_2}=Cov(Z_{k_1},Z_{k_2})=\frac{\Sigma_{k_1k_2}}{\sqrt{\Sigma_{k_1k_1}\Sigma_{k_2k_2}}} \: \text{for } 1\leq k_1 < k_2 \leq 3.
\end{equation}

Suppose instead, that we have a different sequence of treatment effect estimates \(\hat{\eta}^*_1,\hat{\eta}^*_2,\hat{\eta}^*_3\) with distribution given below. The values of \(\Sigma_{11},\dots,\Sigma_{33}\) are the same as in~\eqref{eq:Z_dist} and using the same information levels given by \(\mathcal{I}_k = (\Sigma_{kk})^{-1}\), we define the standardised statistics \(Z^*_k=\hat{\eta}^*_k\sqrt{\mathcal{I}_k}\) for \(k=1,2,3.\) The joint distribution of the sequence \(\hat{\eta}^*_1,\hat{\eta}^*_2,\hat{\eta}^*_3\) and the distribution of the \(Z-\)statistics are given by
\begin{equation}
\label{eq:Z_star_dist}
\begin{bmatrix} \hat{\eta}_1^* \\ \hat{\eta}_2^*  \\ \hat{\eta}_3^*\end{bmatrix} \sim N_K\left[ \left(\begin{array}{c} 0 \\ 0 \\0 \end{array}\right),\left( \begin{array}{ccc}
\Sigma_{11} & \Sigma_{22} &  \Sigma_{33} \\
\Sigma_{22} & \Sigma_{22} & \Sigma_{33} \\
\Sigma_{33} & \Sigma_{33} & \Sigma_{33}
\end{array} \right) \right], \hspace{1cm}
\begin{bmatrix} Z_1^* \\ Z_2^* \\ Z_3^* \end{bmatrix} \sim N_K\left[ \left(\begin{array}{c} 0 \\ 0 \\0 \end{array}\right), \left( \begin{array}{ccc}
1 & \rho_{12}^* & \rho_{13}^* \\
\rho_{12}^* & 1 & \rho_{23}^* \\
\rho_{13}^* & \rho_{23}^* & 1
\end{array} \right) \right]
\end{equation}
with correlation parameter
\begin{equation}
\label{eq:rho_star}
\rho_{k_1k_2}^*=Cov(Z_{k_1}^*,Z_{k_1}^*)=\sqrt{\frac{\Sigma_{k_2k_2}}{{\Sigma_{k_1k_1}}}} \: \text{for } 1\leq k_1 < k_2 \leq 3.
\end{equation}

The following conditions, which are an adaptation of the conditions for the case \(K=2\) are assumed to hold.
\begin{conditions}
\label{conditions:efficiency}\
\begin{enumerate}
\item The upper boundary of a group sequential trial, on the \(Z\)-scale, given by \(b_1,\dots, b_3\) is such that \(b_1 \geq b_2\geq b_3\geq 0\)
\item For all \(k_1 < k_2,\) we have \(\Sigma_{k_1k_2}\geq \Sigma_{k_2k_2}\)
\end{enumerate}
\end{conditions}

In a similar manner to the proof for the case \(K=2\) in the main paper, we aim to show that 
\[\mathbb{P}(Z_1>b_1 \cup Z_2 > b_2 \cup Z_3 > b_3) \leq \mathbb{P}(Z_1^* > b_1 \cup Z_2^* > b_3 \cup Z_3^* > b_3).\]
First, note another representation for the above probabilities. For example the probability on the left hand side can be written as
\begin{align}
\notag
&\mathbb{P}(Z_1>b_1) + \mathbb{P}(Z_2>b_2) + \mathbb{P}(Z_3>b_3) \\
\label{eq:intersections}
- & \mathbb{P}(Z_1 > b_1 \cap Z_2 > b_2) - \mathbb{P}(Z_1 > b_1 \cap Z_3 > b_3) - \mathbb{P}(Z_2 > b_2 \cap Z_3 > b_3) \\
\notag
+ & \mathbb{P}(Z_1 > b_1 \cap Z_2 > b_2 \cap Z_3 > b_3)
\end{align}

The marginal distributions are equivalent, that is we have that \(\mathbb{P}(Z_k>b_k)=\mathbb{P}(Z^*_k>b_k)\) for each \(k=1,2,3.\) Let the remaining probabilities of Equation~\eqref{eq:intersections} be summarised by the functions \(f(\cdot)\) and \(g(\cdot)\), which are given by
\begin{align*}
f(\rho_{12},\rho_{13},\rho_{23}) & = f_{12}(\rho_{12}) + f_{13}(\rho_{13}) + f_{23}(\rho_{23}) \\
&=\mathbb{P}(Z_1>b_1\cap Z_2>b_2)+\mathbb{P}(Z_1>b_1\cap Z_3>b_3)+\mathbb{P}(Z_2>b_2\cap Z_3>b_3) \\
g(\rho_{12},\rho_{13},\rho_{23}) &=\mathbb{P}(Z_1>b_1\cap Z_2>b_2\cap Z_3>b_3).
\end{align*}
Then we need to show that 
\begin{equation}
\label{eq:K3}
g(\rho_{12},\rho_{13},\rho_{23})-g(\rho_{12}^*,\rho_{13}^*,\rho_{23}^*)  \leq f(\rho_{12},\rho_{13},\rho_{23})-f(\rho_{12}^*,\rho^*_{13},\rho_{23}^*).
\end{equation}
The right had side of Equation~\eqref{eq:K3} is greater than or equal to zero. This can be seen by a similar argument as for the case \(K=2,\) it is clear that
\begin{align*}
\mathbb{P}(Z^*_1>b_1\cap Z^*_2>b_2) &\leq\mathbb{P}(Z_1>b_1\cap Z_2>b_2) \\
\mathbb{P}(Z^*_1>b_1\cap Z^*_3>b_3) &\leq\mathbb{P}(Z_1>b_1\cap Z_3>b_3) \\
\mathbb{P}(Z^*_2>b_2\cap Z^*_3>b_3) &\leq\mathbb{P}(Z_2>b_2\cap Z_3>b_3)
\end{align*}
and this implies that \(f(\rho^*_{12},\rho^*_{13},\rho^*_{23})\leq f(\rho_{12},\rho_{13},\rho_{23}).\) Suppose that \(g(\rho^*_{12},\rho^*_{13},\rho^*_{23}) > g(\rho_{12},\rho_{13},\rho_{23}),\) then Equation~\eqref{eq:K3} holds so it is sufficient to consider the case \(g(\rho^*_{12},\rho^*_{13},\rho^*_{23}) \leq g(\rho_{12},\rho_{13},\rho_{23}).\) 

Both sides of Equation~\eqref{eq:K3} are greater than or equal to zero, and we know by Conditions~\ref{conditions:efficiency} that \(\rho^*_{12}\leq \rho_{12},\rho^*_{13}\leq \rho_{13}\) and \(\rho^*_{23}\leq \rho_{23}\). For Equation~\eqref{eq:K3} to hold, we need that the function \(f(\cdot)\) is increasing at a greater rate than \(g(\cdot)\) as each parameter \(\rho_{12},\rho_{13}\) and \(\rho_{23}\) increases.

We shall check that Equation~\eqref{eq:K3} holds graphically for each parameter \(\rho_{12},\rho_{13}\) and \(\rho_{23}\) individually. First note that \(f(\rho_{12},\rho_{13},\rho_{23})\) only depends on \(\rho_{12}\) through the function \(f_{12}(\rho_{12})=\mathbb{P}(Z_1>b_1\cap Z_2>b_2)\) and so plotting this probability as a function of \(\rho_{12}\) is sufficient for checking the rate of change of \(f(\rho_{12},\rho_{13},\rho_{23})\) with respect to \(\rho_{12}.\) We cannot check this for every combination of the parameters and so we follow the steps below to check the condition in a systematic way. 
\begin{enumerate}
\item Choose values of \(\rho_{12},\rho_{13},\rho_{23},\rho^*_{12},\rho^*_{13}\) and \(\rho^*_{23}\)
\item Check that \(f(\rho_{12},\rho_{13},\rho_{23})\) increases in \(\rho_{12}\) at a greater rate than \(g(\rho_{12},\rho_{13},\rho_{23})\). Do this by plotting  \(f_{12}(\rho_{12})\) and \(g(\rho_{12},\rho_{13},\rho_{23})\) against \(\rho_{12}\)
\item Check that \(f(\rho_{12}^*,\rho_{13},\rho_{23})\) increases in \(\rho_{13}\) at a greater rate than \(g(\rho_{12}^*,\rho_{13},\rho_{23})\). Do this by plotting  \(f_{13}(\rho_{13})\) and \(g(\rho_{12}^*,\rho_{13},\rho_{23})\) against \(\rho_{13}\)
\item Check that \(f(\rho_{12}^*,\rho_{13}^*,\rho_{23})\) increases in \(\rho_{23}\) at a greater rate than \(g(\rho_{12}^*,\rho_{13}^*,\rho_{23})\). Do this by plotting  \(f_{23}(\rho_{23})\) and \(g(\rho_{12}^*,\rho_{13}^*,\rho_{23})\) against \(\rho_{23}\).
\end{enumerate}

We shall consider three combinations of the parameters \(\rho_{12},\rho_{13},\rho_{23},\rho^*_{12},\rho^*_{13}\) and \(\rho^*_{23}\). The first scenario is based on equally spaced information levels and the \(\rho\) parameters are divided by a factor of 1.2 to get the \(\rho^*\) parameters. The second scenario is where the two interim analyses have higher information levels (than if information was equally spaced) and the \(\rho\) parameters are divided by 1.1. The final scenario is where the two interim analyses have lower information levels (than if information was equally spaced) and the \(\rho\) parameters are divided by 1.3. The parameter values for the three cases are presented in Table~\ref{tbl:rho_choices}.
\begin{table}
\caption{Parameter choices for three cases to compare rate of change of objects \(\mathbb{P}(Z_{k_1}>b_{k_1}\cap Z_{k_2}>b_{k_2})\) and \(\mathbb{P}(Z_{k_1}>b_{k_1}\cap Z_{k_2}>b_{k_2}\cap Z_{k_3} >b_{k_3})\) for a group sequential trial with \(K=3\) analyses .}{%
\begin{tabular}{ccc}
 \\
&\(\rho_{12},\rho_{13},\rho_{23}\)& \(\rho^*_{12},\rho^*_{13},\rho^*_{23}\)  \\
Case 1& \(0.849, 0.693, 0.980\) & \(0.707, 0.577, 0.816\) \\ 
Case 2& \(0.850, 0.765, 0.990\) & \(0.772, 0.695, 0.900\) \\ 
Case 3& \(0.751, 0.531, 0.919\) & \(0.577, 0.408, 0.707\) \\ 
\end{tabular}}
\label{tbl:rho_choices}
\end{table}
\begin{figure}
\centering
\begin{subfigure}[b]{0.3\textwidth}
	\centering
   	\includegraphics[width=1\linewidth]{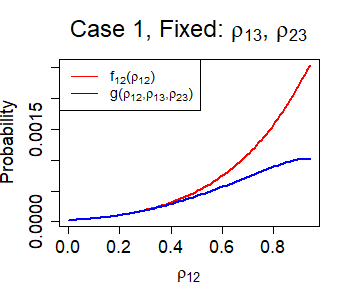}
\end{subfigure}%
\begin{subfigure}[b]{0.3\textwidth}
	\centering
   \includegraphics[width=1\linewidth]{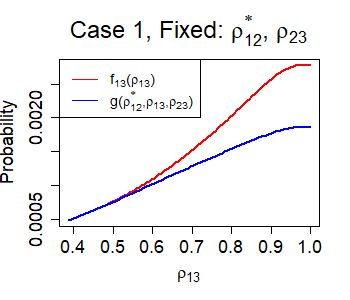}
\end{subfigure}%
\begin{subfigure}[b]{0.3\textwidth}
	\centering
   \includegraphics[width=1\linewidth]{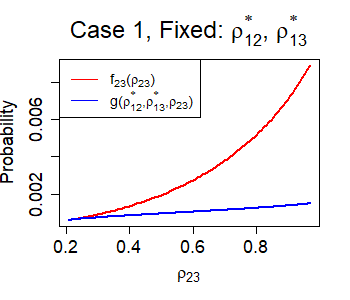}
\end{subfigure}
\begin{subfigure}[b]{0.3\textwidth}
	\centering
   	\includegraphics[width=1\linewidth]{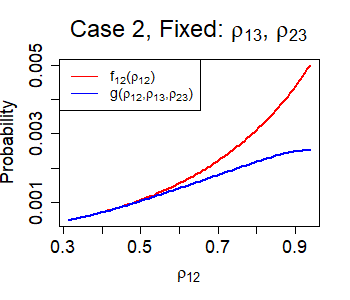}
\end{subfigure}%
\begin{subfigure}[b]{0.3\textwidth}
	\centering
   \includegraphics[width=1\linewidth]{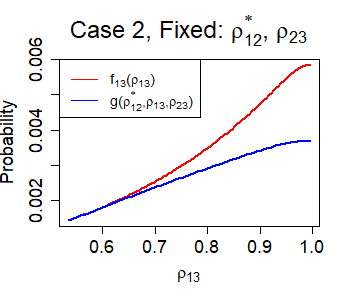}
\end{subfigure}%
\begin{subfigure}[b]{0.3\textwidth}
	\centering
   \includegraphics[width=1\linewidth]{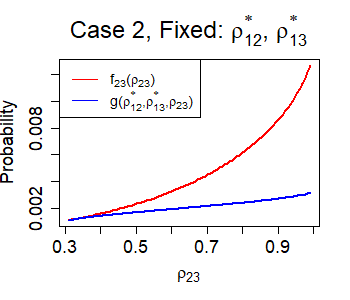}
\end{subfigure}
\begin{subfigure}[b]{0.3\textwidth}
	\centering
   	\includegraphics[width=1\linewidth]{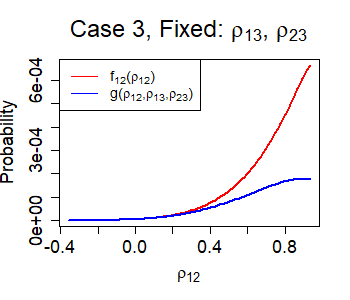}
\end{subfigure}%
\begin{subfigure}[b]{0.3\textwidth}
	\centering
   \includegraphics[width=1\linewidth]{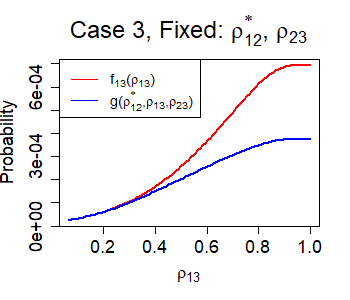}
\end{subfigure}%
\begin{subfigure}[b]{0.3\textwidth}
	\centering
   \includegraphics[width=1\linewidth]{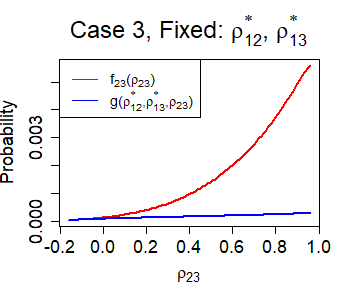}
\end{subfigure}
\caption[Comparison of rate of change of probabilities for \(K=3\).]{Comparison of rate of change of \(f(\rho_{12},\rho_{13},\rho_{23})\) and \(g(\rho_{12},\rho_{13},\rho_{23})\) for a group sequential trial with \(K=3\) analyses. Fixed values of \(\rho_{12},\rho_{13},\rho_{23},\rho_{12}^*,\rho_{13}^*\) and \(\rho_{23}^*\) given by Table~\ref{tbl:rho_choices}.}
\label{fig:P_K3}
\end{figure}
Figure~\ref{fig:P_K3} compares the pairwise probabilities with the probability of crossing all three boundaries for each of the three cases in Table~\ref{tbl:rho_choices}. In cases 1 and 3, as the parameter \(\rho_{13}\rightarrow 1,\) it is not clear whether \(f_{13}(\rho_{13})\) increases at a greater rate than \(g(\rho_{12}^*,\rho_{13},\rho_{23})\). We have checked this numerically and found it to be true. Further the case \(\rho_{13}=1\) is the very rare case where all three analyses have the same information levels. This is not a scenario of concern. Therefore, it is clear that the function \(f(\rho_{12},\rho_{13},\rho_{23})\) increases at a greater rate than the function \(g(\rho_{12},\rho_{13},\rho_{23})\) in each parameter \(\rho_{12},\rho_{13}\) and \(\rho_{23}\) and hence we have shown convincing evidence to prove that Equation~\eqref{eq:K3} holds.

\bibliographystyle{plainnat}  
\bibliography{conditional_score}